\documentclass[a4paper,USenglish,cleveref, autoref, thm-restate,]{lipics-v2021}

\PassOptionsToPackage{nameinlink}{cleveref}

\hideLIPIcs
\nolinenumbers

\usepackage{braket}
\usepackage{thmtools}
\usepackage{booktabs}

\declaretheorem[name=Theorem]{thm}
\declaretheorem[name=Lemma]{lem}
\declaretheorem[name=Corollary]{cor}
\declaretheorem[name=Proposition]{prop}
\declaretheorem[name=Definition]{defn}  

\title{On the Limits of Quantum Multiparty Simultaneous Communication}
\titlerunning{On the Limits of Quantum Multiparty Simultaneous Communication}

\author{Pedro Montealegre}{Universidad Adolfo Ibañez, Santiago, Chile}{p.montealegre@uai.cl}{}{}
\author{Ivan Rapaport}{Universidad de Chile, Santiago, Chile}{rapaport@dim.uchile.cl}{}{}
\author{Jorge Valenzuela}{Universidad Adolfo Ibañez, Santiago, Chile}{jorg.valenzuela@alumnos.uai.cl}{}{}

\authorrunning{P. Montealegre, I. Rapaport, J. Valenzuela}

\Copyright{Pedro Montealegre, Iván Rapaport and Jorge Valenzuela} 

\ccsdesc[500]{Theory of computation~Quantum communication complexity}

\keywords{quantum computing, quantum communication, simultaneous message passing model, state discrimination.}

\relatedversion{} 

\newtheorem{question}{Question}
\newtheorem{open}{Open problem}
\crefalias{section}{appendix}

\begin{document}
\date{2026}
\maketitle

\begin{abstract}
The Simultaneous Message Passing (SMP) model provides a fundamental framework for comparing classical and quantum communication. For two players, Gavinsky et al. (STOC 2006) established a separation underlying the incomparability of shared randomness and quantum communication: \textsc{Index Coordination} needs $O(\log n)$ public-coin bits but $\Omega(n^{1/3})$ bounded-error qubits.

In this work, we establish a multiparty exponential separation through $\operatorname{IC}_{k,n}$, a natural $k$-party generalization of \textsc{Index Coordination}. Public-coin protocols solve it unambiguously with maximum message length $O(\log n)$ bits. In contrast, quantum SMP protocols without shared entanglement or public coins require maximum message length $\Omega(n^{1-1/k})$ qubits in the unambiguous regime and $\Omega(n^{(k-1)/(k+1)})$ qubits in the bounded-error regime. A classical private-coin protocol matches the unambiguous bound, so quantum communication provides no asymptotic advantage over private randomness in this regime. For fixed error parameters, all constants are independent of $k$, establishing the exponential separation for every integer-valued function $k=k(n)\ge2$, without restricting its growth. Both quantum lower bounds become $\Omega(n)$ when $k\ge c\log n$ for any fixed $c>0$, matching the full-input protocol and yielding tight linear complexity in both regimes.

Our results demonstrate that quantum superposition cannot efficiently simulate the coordination afforded by public randomness, extending this separation to arbitrary $k$. To bound success probabilities for multiparty product states, we prove an exact factorization theorem for unambiguous quantum state identification, which may be of independent mathematical interest.\end{abstract}
\newpage

\section{Introduction}

The Simultaneous Message Passing (SMP) model, introduced by Yao~\cite{yao1979}, provides a foundational framework for analyzing the communication bottlenecks and resource trade-offs inherent in distributed computing. In this setting, $k$ spatially separated players receive private inputs and must each send a single, independent message to a central referee. The referee, possessing no input of their own, must then evaluate a joint function or relational problem over the players' distributed data. 

A central objective in quantum communication complexity is to establish the comparative power of three distinct resources within this model: classical private randomness, classical public randomness (public coins), and unentangled quantum superposition. 

In the two-party setting, this hierarchy is well understood. Classically, public randomness is strictly more powerful than private randomness, best exemplified by the \textsc{Equality} function, which requires only $O(1)$ bits with public coins but $\Omega(n^{1/2})$ bits with private coins~\cite{ambainis1996, Kushilevitz1996, Newman96}. In the quantum domain, Buhrman et al.~\cite{buhrman2001} demonstrated that quantum messages can exponentially outperform classical private randomness via quantum fingerprinting~\cite{yao2003}. This raises a fundamental, converse question: \emph{Can quantum communication without pre-shared entanglement efficiently simulate the coordination capabilities of classical shared randomness?}

For two players, Gavinsky, Kempe, and de Wolf~\cite{gavinsky2004} answered this in the negative for the unambiguous regime. They introduced the relational problem \textsc{Index Coordination}, demonstrating that while it admits a straightforward $O(\log n)$ public-coin protocol, any unentangled quantum protocol requires $\Omega(n^{1/2})$ communication in the unambiguous setting, where the referee has two possible outcomes: with high probability returns a correct output, or claims ignorance with an special symbol $\bot$. This was later extended to the general bounded-error regime (where we allow an incorrect symbol in a conclusive output with a fixed probability) by Gavinsky, Kempe, Regev, and de Wolf~\cite{gavinsky2006}, proving an $\Omega(n^{1/3})$ lower bound. Combined with problems where quantum communication outperforms public coins (e.g., \textsc{Hidden Matching}~\cite{bar-yossef2004}), these results establish that quantum communication and shared randomness are strictly incomparable in the two-party SMP model.

Moving beyond two players, however, requires careful theoretical consideration. Early foundational work by Gavinsky and Pudlák~\cite{gavinsky2007multiparty} established exponential separations in non-interactive multiparty communication complexity in a different input model and resource direction. More directly, Gavinsky, Ito, and Wang~\cite{gavinsky2013} showed that shared randomness can exponentially outperform unentangled quantum communication in the number-in-hand SMP model: their promise function $GP_k$ has a public-coin protocol using one bit from each player, whereas every quantum protocol has total communication $\Omega(k n^{1-2/k})$. In particular, at least one player must send $\Omega(n^{1-2/k})$ qubits. Thus, the existence of a genuinely multiparty separation is already known.

What remains open in this line of work is whether the original \textsc{Index Coordination} phenomenon itself admits a natural multiparty extension. While one can trivially add $k-2$ ``dummy'' players to the two-party relation, such a construction does not create a coordination bottleneck that depends on all $k$ inputs. We therefore seek a relation that recovers \textsc{Index Coordination} at $k=2$, involves every player non-trivially, and becomes progressively harder for unentangled quantum protocols as the number of players grows. This leads to two more specific questions:

\begin{question}
    Does \textsc{Index Coordination} admit a natural $k$-party generalization that involves all players non-trivially and preserves the separation between classical shared randomness and unentangled quantum communication?
\end{question}
\begin{question}
    For such a generalization, can one prove quantum communication lower bounds whose exponents improve with $k$, in both the unambiguous and bounded-error regimes?
\end{question}

\subsection{Our Contributions}

In this work, we answer both questions in the affirmative. We introduce a genuine multiparty extension of \textsc{Index Coordination} and prove separations in both error regimes. The lower bounds on the longest quantum message have exponents that increase with $k$ and tend to one. In particular, we obtain \emph{tight linear quantum communication complexity in both error regimes} when $k\ge c\log n$ for any fixed $c>0$: both lower bounds become $\Omega(n)$, matching the $O(n)$ protocol that sends the complete inputs and selector. Indeed, in this range the factors $n^{1/k}$ and $n^{2/(k+1)}$ in the denominators of the two bounds are bounded by constants depending only on $c$. Since the constants in our upper and lower bounds are independent of $k$, the exponential separation in maximum message length holds for every integer-valued function $k=k(n)\ge2$, without restricting its growth rate.

To establish these results, we introduce $\operatorname{IC}_{k,n}$, a natural $k$-party generalization of \textsc{Index Coordination}. In this relation, the first $k-1$ players each receive a string $x_j \in \{0,1\}^n$. The $k$-th player receives a string $x_k \in \{0,1\}^n$ alongside a selector string $s \in \{0,1\}^n$ with exactly $n/2$ positions set to $1$. The referee must output a valid tuple $(i, x_1^{(i)}, \dots, x_k^{(i)})$ such that the selector bit $s^{(i)} = 1$. 

This problem elegantly captures the essence of multiparty coordination, directly recovering the classic two-player relation exactly for $k=2$. Without shared randomness, the first $k-1$ players have no way to agree on a specific valid index $i$ to transmit. All messages must allow the referee to learn the inputs corresponding to a coordinate selected exclusively by $P_k$, forcing the first $k-1$ players to send massive amounts of information to guarantee a valid intersection. At a high level, both lower bounds turn this lack of coordination into an information bottleneck. A short quantum message has only limited predictive power, and quantum information bounds forces that power to be spread across all input coordinates. Once a protocol is fixed, we fix a hard distribution among coordinates where the first players collectively reveal little information.

We establish lower bounds on the longest message in any quantum protocol for $\operatorname{IC}_{k,n}$, with exponents that depend explicitly on the number of players. These bounds show that the coordination provided by shared randomness cannot be efficiently simulated by unentangled quantum messages. We prove the separation in two error regimes:

\begin{table}[h]
\centering
\renewcommand{\arraystretch}{1.3}
\begin{tabular}{@{}lll@{}}
\toprule
\textbf{Communication Model} & \textbf{Error Regime} & \textbf{Cost} \\ 
\midrule
Classical Public-Coin & Unambiguous \& Bounded-Error & $O(\log n)$ bits \\
Classical Private-Coin & Unambiguous \& Bounded-Error & $O(n^{1-1/k})$ bits \\
Quantum (No pre-shared entanglement) & Bounded-Error & $\Omega\left( n^{\frac{k-1}{k+1}} \right)$ qubits \\ 
Quantum (No pre-shared entanglement) & Unambiguous & $\Theta(n^{1-1/k})$ qubits \\ 
\bottomrule
\end{tabular}
\vspace{0.2cm}
\caption{Bounds on the maximum message length for the multiparty $\operatorname{IC}_{k,n}$ relation in the SMP model, for fixed $0<\varepsilon<1$ (with $\varepsilon<1/8$ for the bounded-error quantum lower bound). All implicit constants are independent of $k$. When $k\ge c\log n$ for any fixed $c>0$, quantum and classical private-coin complexity is $\Theta(n)$ in both error regimes, while the public-coin upper bound remains $O(\log n)$.}
\label{tab:results}
\end{table}

As summarized in \Cref{tab:results}, our public-coin protocol is strictly unambiguous (it never errs, and only aborts with probability at most $\varepsilon$). Against this highly efficient $O(\log n)$ protocol, we present two major contributions:

\begin{enumerate}
    \item \textbf{Bounded-Error Lower Bounds via Parity Collapse:} Extending the separation to $k$ players requires a product bound on the success probability of the referee respect the $k$ local registers of the players. This bound is notoriously hard to establish for $k$ mixed states in the bounded-error setting ~\cite{klauck2007, lee2008, sherstov2012}. We circumvent this by collapsing the information of many players in one bit. By grouping the first $k-1$ players into a single register defined by the parity of their inputs, we reduce the $k$-party coordination constraint to a two-register asymmetric state identification task. This allows us to bridge the existing 2-register identification theorems to our setting, proving that at least one player must send $\Omega(n^{(k-1)/(k+1)})$ qubits in any bounded-error protocol without shared randomness or prior entanglement.

    \item \textbf{Tight Unambiguous Lower Bounds:} To achieve tighter bounds in the unambiguous special case, we generalize to multiparty setting a technical result for unambiguous quantum discrimination. In the line of recent developments on bounds for the optimal local discrimination of multipartite quantum states~\cite{ha2022}, we prove a bound for the unambiguous identification of mixed product states over arbitrary finite alphabets (\Cref{sec:qsd-main}). Using our generalized direct-product theorem, we prove that at least one player must send $\Omega(n^{1-1/k})$ qubits in any unambiguous protocol for $\operatorname{IC}_{k,n}$. By employing an information-pooling argument and the Arithmetic Mean-Geometric Mean (AM-GM) inequality, we establish an exponential separation that gets stronger as the number of players $k$ increases.
\end{enumerate}

\section{Related Work}
\label{sec:related}

Our work resides at the intersection of quantum communication complexity and quantum state discrimination, and builds upon several major historical milestones in the SMP model.

\paragraph*{Two-Party Separations}
The foundational bounds establishing that quantum communication cannot simulate public coins were proved by Gavinsky, Kempe, Regev, and de Wolf~\cite{gavinsky2004, gavinsky2006} using the two-player \textsc{Index Coordination} problem. Our work serves as the natural multiparty generalization of this phenomenon. Specifically, our bounded-error proof relies on their foundational 2-register asymmetric identification bounds and random access code lemmas~\cite{gavinsky2006} as the mathematical endpoint of our parity-collapse reduction.

\paragraph*{Multiparty Shared Randomness vs. Quantum Communication}
In the multiparty domain, Gavinsky, Ito, and Wang~\cite{gavinsky2013} previously established that shared randomness can exponentially outperform unentangled quantum messages. However, our work differs conceptually and methodologically from theirs. They studied a specific promise function, $GP_k$ (Gap-Parity), and used a hybrid argument over trace distances to prove a total-communication lower bound of $\Omega(k n^{1-2/k})$, which implies that some player must send $\Omega(n^{1-2/k})$ qubits. In contrast, we study a natural relational coordination problem ($\operatorname{IC}_{k,n}$). Because we analyze state identification and product structures directly rather than relying on hybrid arguments, we obtain lower bounds on the maximum message length with exponents $1-1/k$ (unambiguous) and $(k-1)/(k+1)$ (bounded-error), both of which scale cleanly with $k$.

\paragraph*{Multipartite Entanglement vs. Unentangled Messages}
Very recently, Chakraborty, Banik, and de Wolf~\cite{chakraborty2026} demonstrated that SMP protocols utilizing shared \emph{multipartite entanglement} (such as GHZ states) can exponentially outperform unentangled quantum communication. It is crucial to distinguish their resource model from ours. Their efficient protocols require active, prior quantum entanglement distributed among the players. Our work, by contrast, strictly studies the limits of unentangled quantum messages attempting to simulate classical shared randomness, confirming that without prior entanglement, quantum communication remains severely bottlenecked by multiparty coordination tasks.

\section{Preliminaries}
\label{sec:preliminaries}

In this section, we establish the foundational definitions, communication models, and information-theoretic tools required for our multiparty separations. For a comprehensive introduction to general quantum information theory, we refer the reader to standard texts such as Nielsen and Chuang~\cite{nielsen2000} and Watrous~\cite{watrous2018}.

\subsection{Quantum Information Basics}
We restrict our attention to finite-dimensional Hilbert spaces, denoted by $\mathcal{H}$. The set of linear operators on $\mathcal{H}$ is denoted by $\mathrm{L}(\mathcal{H})$, and the set of density operators (positive semidefinite operators with unit trace) is denoted by $\mathrm{D}(\mathcal{H})$. A quantum state is represented by a density operator $\rho \in \mathrm{D}(\mathcal{H})$. If $\rho$ is a rank-one projector, it represents a pure state.

For any operator $A \in \mathrm{L}(\mathcal{H})$, the support $\mathrm{supp}(A)$ is the orthogonal complement of its kernel, $\ker(A)$. If $S \subseteq \mathcal{H}$ is a closed subspace, $S^\perp$ denotes its orthogonal complement, and $\Pi_S$ is the orthogonal projector onto $S$. 

To quantify the distinguishability of two quantum states, we rely on the trace norm. For an operator $A$, the trace norm is defined as $\|A\|_{\mathrm{tr}} = \frac{1}{2} \operatorname{Tr}(\sqrt{A^\dagger A})$. Under this convention, the trace distance between two quantum states $\rho_0$ and $\rho_1$ lies in the interval $[0, 1]$ and perfectly characterizes the maximum bias with which any physical measurement can distinguish the two states.

A quantum measurement is described by a Positive Operator-Valued Measure (POVM), which is a set of positive semidefinite operators $\{E_i\}$ such that $\sum_i E_i = I$. Operationally, if a system is in state $\rho$, the exact probability of obtaining outcome $i$ upon applying this measurement is precisely $\operatorname{Tr}(E_i \rho)$.

\subsection{The \texorpdfstring{$(p, \eta)$}{(p, eta)}-Predictor and State Identification}
\label{sec:predictors}

To analyze protocols in the bounded-error regime, we utilize the framework of bounded-error quantum state identification introduced by Gavinsky, Kempe, Regev, and de Wolf~\cite{gavinsky2006}.

Consider a quantum state $\alpha_X$ that encodes a uniformly distributed random bit $X \in \{0,1\}$. A referee wishes to guess $X$ but is allowed to occasionally output a conclusive failure symbol, $\bot$, representing ignorance. 
A POVM with three outcomes $\{0, 1, \bot\}$ is defined as a \emph{$(p, \eta)$-predictor} for $X$ if it satisfies two conditions:
\begin{enumerate}
    \item It produces a definitive guess with probability exactly $p = \Pr[\text{outcome} \neq \bot]$.
    \item Conditioned on making a definitive guess, the probability that the guess is incorrect is at most $\eta$. That is, $\Pr[X \neq \text{outcome} \mid \text{outcome} \neq \bot] \le \eta$.
\end{enumerate}
The quantity $D_\eta(\alpha_0, \alpha_1)$ is defined as the supremum over all possible POVMs of the success probability $p$ such that the measurement is a $(p, \eta)$-predictor for $X$. This metric cleanly captures the trade-off between the frequency of a measurement producing a conclusive answer and the reliability of that answer.

\subsection{Communication Models and Error Regimes}
\label{sec:smpmodels}

In the multiparty Simultaneous Message Passing (SMP) model, $k$ spatially separated players $P_1, \dots, P_k$ receive private inputs $x_j \in X_j$. They cannot communicate with each other; instead, each player applies a local encoding function and sends a single message to a central referee. The referee, who receives no input, must process these $k$ messages to evaluate a relational problem $R \subseteq X_1 \times \cdots \times X_k \times Z$, outputting a solution $z \in Z$. If $m_j$ denotes the worst-case length of player $P_j$'s message, we measure communication throughout the paper by the maximum message length
\[
    m:=\max_{j\in[k]}m_j,
\]
rather than by the sum of all message lengths. Thus, a lower bound on $m$ means that at least one player must send a message of that length.

We evaluate communication costs across three principal resources:
\begin{itemize}
  \item \textbf{Classical Public-Coin ($R^{\parallel}_{\mathrm{pub}}$):} All players (excluding the referee) share access to a common, unbounded random string that does not count toward the communication cost. The messages sent to the referee are classical bit strings.
  \item \textbf{Classical Private-Coin ($R^{\parallel}_{\mathrm{priv}}$):} Players do not share any public randomness, but each player has access to an independent, local source of random coin flips. The messages sent to the referee are classical bit strings. This model is strictly weaker than the unentangled quantum model ($Q^{\parallel}$), because an $m$-qubit quantum message can perfectly simulate $m$ classical private coin flips. 
  \item \textbf{Quantum without Pre-pre-shared entanglement ($Q^{\parallel}$):} Each player $P_j$ sends a quantum state $\rho_j(x_j)$ of dimension $2^{m_j}$ (i.e., an $m_j$-qubit message). The referee applies a POVM to the joint state $\bigotimes_{j=1}^k \rho_j(x_j)$. Crucially, the players do not share any prior entanglement or public coins; the global state received by the referee is strictly a product state over the players' subspaces.
\end{itemize}

We evaluate the success of these protocols under two strict error definitions, governed by a constant parameter $\varepsilon \in [0, 1)$:
\begin{itemize}
  \item \textbf{Bounded-Error:} The referee is allowed to output a special failure symbol $\bot$. The worst-case probability over all inputs that the protocol fails---either by outputting an invalid tuple $(x_1, \dots, x_k, z) \notin R$ or by outputting $\bot$---is at most $\varepsilon$. Equivalently, the referee outputs a valid tuple with probability at least $1 - \varepsilon$.
  \item \textbf{Unambiguous:} The referee is allowed to output a special failure symbol $\bot$ with probability at most $\varepsilon$. However, the protocol is strictly forbidden from making an error: whenever the referee outputs a definitive answer $z \neq \bot$, the tuple must be valid, meaning $(x_1,\dots,x_k,z) \in R$ with probability $1$.
\end{itemize}
Because unambiguous protocols never output an incorrect answer and only fail by aborting, any valid unambiguous protocol is trivially also a valid bounded-error protocol. Note that while upper bounds must satisfy these criteria in the worst-case, lower bounds are proven by analyzing the average-case error over a specifically chosen hard input distribution.
\subsection{Information-Theoretic Tools}

To establish lower bounds on quantum communication, we utilize the Random Access Code (RAC) lemma, originally pioneered by Nayak~\cite{nayak1999} and adapted for bounded-error state identification by Gavinsky et al.~\cite{gavinsky2006}. The RAC lemma bounds the sum of the capacities of individual bit predictors extractable from a single quantum state, and applies uniformly to both the unambiguous and bounded-error regimes.

\begin{prop}[Random Access Codes~\cite{gavinsky2006, nayak1999}] \label{thm:random-access}
    Let $X=X_1\cdots X_n$ be uniformly distributed on $\{0,1\}^n$, and let $Q$ be an $m$-qubit encoding of $X$. Suppose that, for every $i\in[n]$, there exists a measurement on $Q$ whose outcome is a $(\lambda_i,\eta_i)$-predictor for $X_i$, where $0\le \eta_i\le 1/2$. The measurement may depend on $i$. Then
    \[
        \sum_{i=1}^n
        \lambda_i\bigl(1-H(\eta_i)\bigr)
        \le m,
    \]
    
    Consequently, this lemma yields three critical geometric constraints:
    \begin{enumerate}
        \item \textbf{Zero-Error $L_1$ Bound:} If every predictor is unambiguous, then $\eta_i=0$ for all $i$. Since $H(0)=0$, the RAC inequality takes the simple form $\sum_{i=1}^n \lambda_i \le m$.
        \item \textbf{Fixed-Error $L_1$ Bound:} If the predictors operate with a fixed error $\eta < 1/2$, then the sum of the success probabilities is bounded linearly by the encoding size: $\sum_{i=1}^n \lambda_i = O_\eta(m)$.
        \item \textbf{$L_2$ Trace-Distance Bound:} If the predictors are derived strictly from distinguishing two mixed states $\rho_i^0$ and $\rho_i^1$, operating as $(1, 1/2 - d_i/2)$-predictors where $d_i = \|\rho_i^0 - \rho_i^1\|_{\mathrm{tr}}$, then the sum of the squared trace distances is bounded by the encoding size: $\sum_{i=1}^n d_i^2 = O(m)$.
    \end{enumerate}
\end{prop}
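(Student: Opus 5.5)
The plan is to run Nayak's information-theoretic argument coordinate by coordinate, using the chain rule for mutual information, and then read off the three consequences. Let $Q$ be the $m$-qubit register and consider the classical--quantum state $\sum_x 2^{-n}\ket{x}\bra{x}\otimes \rho_x$. Holevo's bound gives $I(X:Q)\le S(Q)\le m$. Because the $X_i$ are independent, the chain rule yields
\[
I(X:Q)=\sum_{i=1}^n I(X_i:Q\mid X_1\cdots X_{i-1})\ge \sum_{i=1}^n I(X_i:Q),
\]
where the inequality uses independence of $X_i$ from $X_{<i}$. It therefore suffices to prove, for each $i$, the single-coordinate estimate $I(X_i:Q)\ge \lambda_i(1-H(\eta_i))$.

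For the single-coordinate estimate, apply the $i$-th measurement and let $Y_i\in\{0,1,\bot\}$ denote its outcome. By data processing, $I(X_i:Q)\ge I(X_i:Y_i)$. Write $H(X_i\mid Y_i)=\Pr[\bot]H(X_i\mid Y_i=\bot)+\lambda_i H(X_i\mid Y_i\ne\bot, Y_i)$ and bound the two terms separately. The first term is at most $1-\lambda_i$. For the second, conditioning on a definite guess, the error probability is at most $\eta_i\le 1/2$, so Fano's inequality for a binary variable gives $H(X_i\mid Y_i, Y_i\neq\bot)\le H(\eta_i)$; this uses that $H$ is increasing on $[0,1/2]$ and concave, so averaging over the two outcome values is harmless. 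Since $H(X_i)=1$, combining the two terms gives $I(X_i:Y_i)\ge 1-(1-\lambda_i)-\lambda_iH(\eta_i)=\lambda_i(1-H(\eta_i))$. Summing over $i$ proves the main inequality.

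The three consequences then follow directly:
\begin{enumerate}
\item For item~1, set $\eta_i=0$, so $H(0)=0$ and the inequality becomes $\sum_i\lambda_i\le m$.
\item For item~2, if $\eta_i\le\eta<1/2$ then $1-H(\eta_i)\ge 1-H(\eta)>0$, so $\sum_i\lambda_i\le m/(1-H(\eta))=O_\eta(m)$.
\item For item~3, Helstrom's measurement distinguishing the averaged states $\rho_i^0,\rho_i^1$ gives a $(1,1/2-d_i/2)$-predictor. Using $1-H(1/2-\delta)\ge \frac{2}{\ln 2}\delta^2$ with $\delta=d_i/2$, we get $\sum_i d_i^2\le 2\ln 2\cdot m=O(m)$. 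The inequality $1-H(1/2-\delta)\ge \frac{2}{\ln 2}\delta^2$ follows from the Taylor expansion of binary entropy around $1/2$, since all coefficients of $1-H(1/2-\delta)$ are nonnegative.
\end{enumerate}

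The main obstacle is the conditional-entropy bound in the bounded-error case. There the error guarantee holds only on average over the two non-$\bot$ outcomes, so a per-outcome Fano bound is unavailable. The plan is to handle this through concavity of $H$ restricted to $[0,1/2]$, or more simply by merging the outcomes: condition only on the event $\{Y_i\ne\bot\}$ together with the guessed bit, and treat the guess as an estimator whose overall error is at most $\eta_i$. Binary Fano then gives the bound $H(\eta_i)$ directly, with no per-outcome requirement.
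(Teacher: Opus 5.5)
Your proposal is correct and is the standard Nayak / Gavinsky--Kempe--Regev--de Wolf argument, which the paper cites rather than reproves. That argument runs: Holevo's bound $I(X:Q)\le m$; superadditivity $I(X:Q)\ge\sum_i I(X_i:Q)$ from the chain rule and independence; data processing to the outcome $Y_i$; and binary Fano on the conclusive event. Your derivations of the three consequences match how the paper uses them, e.g.\ item~3 via the quadratic bound $1-H(1/2-t/2)\ge c_0t^2$ invoked in the proof of \Cref{thm:bounded-error}, for which you supply the explicit $c_0=1/(2\ln 2)$. One small wording fix: the Jensen step should use concavity of $H$ on all of $[0,1]$ (which holds), not just on $[0,1/2]$, because the per-outcome conditional error rates need not lie in $[0,1/2]$; only their average must be at most $\eta_i\le 1/2$ for the monotonicity step.
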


Finally, to pool classical capacities and trace distances across multiple players, we repeatedly use the standard Arithmetic Mean-Geometric Mean inequality~\cite{Veljan2017} in the following form, valid for non-negative real numbers $x_1, \dots, x_N$:
\[
    \prod_{j=1}^N x_j \le \left( \frac{1}{N} \sum_{j=1}^N x_j \right)^N.
\]

\subsection{The Direct-Product Theorem}

In the multiparty SMP model, the referee frequently receives a composite message in the form of a product state $\rho = \bigotimes_{j=1}^k \rho_j$. To solve the specific distributed relational problem presented in this paper, the referee must identify which global label $b \in \Sigma_1 \times \cdots \times \Sigma_k$ (corresponding to the players' distributed inputs) is encoded within this product state. Our goal in this section is to rigorously upper bound the best possible success probability of such identification task in the unambiguous setting.

Direct-product theorems of this form are well known in two-party communication complexity, typically demonstrating that solving multiple independent instances of a task requires resources that scale linearly with the number of instances. Here, the task is the unambiguous identification of a composite label from a tensor product of mixed states, and the core challenge is to mathematically characterize the optimal measurement strategy under the strict zero-error constraint.

For the two-player setting ($k = 2$), Gavinsky et al.~\cite{gavinsky2004} obtained a lower bound by exploiting a simple factorization of the supports of the relevant quantum states. For an arbitrary number of players $k$ and arbitrary finite alphabets $\Sigma_j$, such a factorization is far from obvious, as isolating the correct global state requires projecting to a complex global measurement subspace. Building upon the geometric intuition of the binary case, we establish a stronger generalized result. We demonstrate that, despite this complexity, the global subspace relevant for unambiguous identification still factorizes cleanly into a tensor product of local exclusion subspaces. This structural property allows us to formally state our direct-product theorem, bounding the global success probability strictly by the product of the local success probabilities.

\label{sec:qsd-main}
\begin{thm}[Unambiguous Product Identification over Finite Alphabets]\label{thm:product-bound-gen}
For each $j \in [k]$, let $\{\alpha_j^{(a)}\}_{a \in \Sigma_j}$ be an ensemble of mixed states on $\mathcal{H}_j$. Define the optimal single-register unambiguous success probability for identifying state $a$ as:
\[
    p_j^{(a)} := \sup_{M_j} \operatorname{Tr}\left(E_a \, \alpha_j^{(a)}\right),
\]
where the supremum is over all unambiguous POVMs $M_j = \{E_x\}_{x \in \Sigma_j} \cup \{E_{\bot}\}$.

Consider the ensemble of product states $\rho_b = \bigotimes_{j=1}^k \alpha_j^{(b_j)}$ for $b \in \Sigma_1 \times \cdots \times \Sigma_k$. The maximum probability $P(b)$ of a global unambiguous measurement correctly identifying a specific sequence $b$ is bounded by the product of the local probabilities:
\[
    P(b) \le \prod_{j=1}^k p_j^{(b_j)}.
\]
\end{thm}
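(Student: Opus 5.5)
The plan is to characterize the optimal unambiguous success probability for a single label as the weight of the target state on an explicit ``exclusion subspace''. Then I will show that for product ensembles the global exclusion subspace factorizes as a tensor product of the local ones.

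\textbf{Step 1 (single-label reduction).} For any ensemble $\{\sigma_c\}_{c}$ and any target label $b$, write $S_c:=\mathrm{supp}(\sigma_c)$. An unambiguous POVM element $E_b$ must satisfy $\operatorname{Tr}(E_b\sigma_c)=0$ for all $c\neq b$. Since $E_b,\sigma_c\succeq 0$, this forces $\mathrm{supp}(E_b)\perp S_c$ for every $c\neq b$. Equivalently, $\mathrm{supp}(E_b)\subseteq K_b:=\bigl(\sum_{c\neq b}S_c\bigr)^\perp$. Together with $E_b\preceq I$, this gives $E_b\preceq \Pi_{K_b}$, and hence $\operatorname{Tr}(E_b\sigma_b)\le \operatorname{Tr}(\Pi_{K_b}\sigma_b)$. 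The bound is attained by the unambiguous POVM $\{\Pi_{K_b},\,I-\Pi_{K_b}\}$, whose second element is assigned to $\bot$. Applying this locally gives
\[
p_j^{(a)}=\operatorname{Tr}\bigl(\Pi_{K_j^{(a)}}\alpha_j^{(a)}\bigr),\qquad K_j^{(a)}=\Bigl(\sum_{x\neq a}S_j^{(x)}\Bigr)^\perp .
\]
Applying it globally gives $P(b)\le\operatorname{Tr}(\Pi_{K_b}\rho_b)$, where $K_b$ is the orthogonal complement of the span of the supports $\mathrm{supp}(\rho_c)=\bigotimes_j S_j^{(c_j)}$ over $c\neq b$.

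\textbf{Step 2 (restriction to the relevant spaces).} Let $V_j:=\sum_{x\in\Sigma_j}S_j^{(x)}\subseteq\mathcal H_j$. All states involved are supported in $V_j$, respectively in $\bigotimes_j V_j$. It therefore suffices to compute the complements inside these spaces: projecting onto the parts of $K$ lying outside $\bigotimes_j V_j$ does not change any trace against the states. Inside $V_j$, let $T_j:=\sum_{x\neq b_j}S_j^{(x)}$ and $K_j:=V_j\ominus T_j$. Then $\operatorname{Tr}(\Pi_{K_j}\alpha_j^{(b_j)})=p_j^{(b_j)}$, because $\alpha_j^{(b_j)}$ lives in $V_j$ and $K_j^{(b_j)}\cap V_j=K_j$.

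\textbf{Step 3 (factorization of the global subspace --- the main point).} I claim that, inside $\bigotimes_j V_j$,
\[
\sum_{c\neq b}\ \bigotimes_{j} S_j^{(c_j)} \;=\; \sum_{j=1}^k V_1\otimes\cdots\otimes V_{j-1}\otimes T_j\otimes V_{j+1}\otimes\cdots\otimes V_k .
\]
For ``$\supseteq$'': fix $j$ and sum over all $c$ with $c_j\neq b_j$ and the other coordinates arbitrary. By distributivity of $\otimes$ over sums of subspaces, this sum is exactly the $j$-th term on the right. For ``$\subseteq$'': every $c\neq b$ has some coordinate with $c_j\neq b_j$, so $\bigotimes_i S_i^{(c_i)}$ lies in the $j$-th term.

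Next, decompose each factor orthogonally as $V_j=T_j\oplus K_j$ and expand $\bigotimes_j V_j$ into $2^k$ mutually orthogonal blocks. Every block containing at least one factor $T_j$ lies in the right-hand side. Conversely, the right-hand side is spanned by such blocks. Hence its orthogonal complement in $\bigotimes_j V_j$ is precisely the single block $\bigotimes_j K_j$, that is, $K_b=\bigotimes_j K_j$. This is the step where the care lies, namely checking that the distributivity and the orthogonal block expansion hold for subspace sums (not just direct sums); the resolution is that each $V_j$ splits orthogonally.

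\textbf{Step 4 (conclusion).} Using $\Pi_{K_b}=\bigotimes_j\Pi_{K_j}$ and $\rho_b=\bigotimes_j\alpha_j^{(b_j)}$,
\[
P(b)\le\operatorname{Tr}\bigl(\Pi_{K_b}\rho_b\bigr)=\prod_{j=1}^k\operatorname{Tr}\bigl(\Pi_{K_j}\alpha_j^{(b_j)}\bigr)=\prod_{j=1}^k p_j^{(b_j)}.
\]
Steps 1 and 4 are routine; Step 3 carries the content of the theorem.
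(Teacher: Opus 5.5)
Your proof is correct and follows essentially the paper's route: you characterize the single-label optimum as $\operatorname{Tr}(\Pi_{K_b}\sigma_b)$, restrict to the span $V_j$ of the supports, and write the wrong subspace as the sum of the cylinders $V_1\otimes\cdots\otimes T_t\otimes\cdots\otimes V_k$, exactly as in the paper's Generalized Subspace Factorization lemma. The differences are minor. You compute the complement by the orthogonal $2^k$-block expansion of $\bigotimes_j(T_j\oplus K_j)$, whereas the paper intersects the cylinder complements using commuting projectors. You also justify the restriction step explicitly (the complement splits as $(V\ominus W_b)\oplus V^\perp$), where the paper only cites it.
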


The proof of this theorem is deferred to \Cref{app:proof-of-dir}

To apply this theorem to the SMP model where inputs are drawn from a uniform prior, we bridge the conditional identification probability of a specific sequence to the average capacities constrained by the RAC inequality.

\begin{cor}[Average Capacity Bound]\label{cor:avg-capacity}
For each register $j$, let $s_j$ be the optimal average success probability, under the uniform prior on $\Sigma_j$, of a single unambiguous POVM that may identify any symbol:
\[
    s_j := \sup_{M_j}\frac{1}{|\Sigma_j|}\sum_{a\in\Sigma_j}
    \operatorname{Tr}\left(E_a\alpha_j^{(a)}\right),
\]
where $M_j=\{E_a\}_{a\in\Sigma_j}\cup\{E_\bot\}$ is required to be unambiguous for the entire ensemble. For any specific target sequence $b \in \Sigma_1 \times \cdots \times \Sigma_k$, the global unambiguous success probability $P(b)$ is bounded by:
\[
    P(b) \le \left( \prod_{j=1}^k |\Sigma_j| \right) \prod_{j=1}^k s_j.
\]
For binary alphabets where $|\Sigma_j| = 2$ for all $j$, this reduces to $P(b) \le 2^k \prod_{j=1}^k s_j$.
\end{cor}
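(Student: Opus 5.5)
The plan is to derive the corollary from \Cref{thm:product-bound-gen} by showing, for each register $j$ and each symbol $a\in\Sigma_j$, the per-symbol bound
\[
    p_j^{(a)} \le |\Sigma_j|\, s_j .
\]
Taking the product over $j$ with $a=b_j$ and substituting into the bound of \Cref{thm:product-bound-gen} then gives $P(b)\le \prod_j p_j^{(b_j)} \le \bigl(\prod_j|\Sigma_j|\bigr)\prod_j s_j$. The binary case is the specialization $|\Sigma_j|=2$.

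To prove the per-symbol bound, fix $j$ and $a$, and take any unambiguous POVM $M=\{E_x\}_{x\in\Sigma_j}\cup\{E_\bot\}$ from the supremum defining $p_j^{(a)}$. Here "unambiguous" means $\operatorname{Tr}(E_x\alpha_j^{(c)})=0$ whenever $x\neq c$. This is exactly the constraint imposed on the POVMs in the definition of $s_j$, so $M$ is itself admissible there. Its average success is
\[
    \frac{1}{|\Sigma_j|}\sum_{c\in\Sigma_j}\operatorname{Tr}\bigl(E_c\alpha_j^{(c)}\bigr) \;\ge\; \frac{1}{|\Sigma_j|}\operatorname{Tr}\bigl(E_a\alpha_j^{(a)}\bigr),
\]
because every term is nonnegative (each $E_c$ is PSD). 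Hence $s_j\ge \operatorname{Tr}(E_a\alpha_j^{(a)})/|\Sigma_j|$, and taking the supremum over $M$ yields $p_j^{(a)}\le|\Sigma_j|s_j$.

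The only step needing care is confirming that the two definitions use the same class of admissible measurements. The supremum in $p_j^{(a)}$ must range over POVMs that are unambiguous for the whole ensemble, as is implicit in \Cref{thm:product-bound-gen}. If that class were instead only required to avoid wrongly outputting $a$, I would first replace $M$ by a coarsened POVM $\{E_a, 0,\dots,0, I-E_a\}$. This POVM is unambiguous for the entire ensemble whenever $E_a$ annihilates all $\alpha_j^{(c)}$ with $c\ne a$, and it keeps the same success probability on $a$. After that replacement the argument above applies unchanged, so I do not expect a genuine obstacle.
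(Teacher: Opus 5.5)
Your proposal is correct and follows the paper's proof. Both prove the per-symbol bound $p_j^{(a)}\le|\Sigma_j|\,s_j$ by turning an (approximately) optimal measurement for $p_j^{(a)}$ into an admissible measurement for $s_j$, where your coarsening $\{E_a,0,\dots,0,I-E_a\}$ is exactly the paper's construction, and then substitute into \Cref{thm:product-bound-gen} factor by factor; dropping the other nonnegative terms when $M$ is already unambiguous for the whole ensemble is only a cosmetic variant of the same step.
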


\begin{proof}
Fix a register $j$ and a target symbol $a$. A one-sided POVM attaining, or arbitrarily approaching, $p_j^{(a)}$ in \Cref{thm:product-bound-gen} is also a valid POVM for the average identification problem: retain its element $E_a$, set every other conclusive element to zero, and assign the remaining operator to $E_\bot$. Under the uniform prior, this POVM succeeds with probability arbitrarily close to $p_j^{(a)}/|\Sigma_j|$. Hence
\[
    p_j^{(a)} \le |\Sigma_j|s_j.
\]
Applying this inequality to each factor in \Cref{thm:product-bound-gen} gives the result.
\end{proof}

\section{Exponential Separations for Multiparty Index Coordination}
\label{sec:separation}

We now apply our structural theorems to establish an exponential separation between classical shared randomness and unentangled quantum communication in the multiparty SMP model. To achieve this, we study a relational problem that we call \textsc{Multiparty Index Coordination}, denoted $\operatorname{IC}_{k,n}$. 

This problem serves as the natural $k$-party generalization of the bipartite \textsc{Index Coordination} problem, which was originally utilized by Gavinsky et al.~\cite{gavinsky2004, gavinsky2006} to demonstrate the incomparability of public coins and quantum messages for two players. In the original two-party setting, Alice and Bob must coordinate to output their respective bits for a specific index known only to Bob. In our multiparty extension, the first $k-1$ players each receive an $n$-bit string. The $k$-th player receives an $n$-bit string alongside a selector string $s$ with a Hamming weight of exactly $n/2$, which acts as a mask indicating the valid target indices. The goal of the referee is to successfully output the players' bits for any one of these valid indices.

The intuition behind this problem captures the essence of distributed coordination bottlenecks. In the classical public-coin model, the players can achieve this task by using the shared randomness to collectively sample a random index. Because the target set has a relative frequency of $1/2$, a constant number of independent samples contains a valid index with high probability, allowing the players to succeed using only logarithmic communication.

In contrast, in the unentangled quantum model without shared randomness, the players lack any mechanism to coordinate. The first $k-1$ players have no knowledge of the selector string $s$, meaning they do not know which parts of their own input send for the referee to success. Consequently, their independent quantum messages must inherently carry enough predictive power to answer queries about a large fraction of their inputs simultaneously.

We formalize this problem as follows:

\begin{defn}[Problem $\operatorname{IC}_{k,n}$]
    The problem involves $k\ge2$ players $P_1, \dots, P_k$ and a central referee, where $n\ge2$ is even.
    \begin{itemize}
        \item \textbf{Input:} Each player $P_j$ for $j \in \{1, \dots, k-1\}$ receives a string $x_j \in \{0,1\}^n$.
        Player $P_k$ receives a string $x_k \in \{0,1\}^n$ and a selector string $s \in \{0,1\}^n$ with Hamming weight exactly $|s| = n/2$.
        \item \textbf{Output:} The referee must output a tuple $(i, x_1^{(i)}, \dots, x_k^{(i)})$ for some index $i$ such that $s^{(i)} = 1$.
    \end{itemize}
\end{defn}

\subsection{Classical Protocols for \texorpdfstring{$\operatorname{IC}_{k,n}$}{IC\_(k,n)}}

\paragraph*{Public-Coin Upper Bound.}
This problem admits a highly efficient classical protocol utilizing shared randomness. The players use the public coin to sample an index $i \in [n]$ uniformly at random. Since the target set has relative frequency $1/2$, a random index is valid with probability $1/2$.
All players send their bit $x_j^{(i)}$ and the index $i$; player $P_k$ additionally sends $s^{(i)}$. If $s^{(i)}=0$, the referee aborts ($\bot$). Otherwise, the referee outputs the tuple. 

Crucially, this protocol is strictly \emph{unambiguous}: it never outputs an incorrect tuple. To bound the abort probability below a constant $\varepsilon\in(0,1)$, the players simply repeat this process $O(\log(1/\varepsilon))$ times. Its maximum message length is therefore bounded by
\[
    R^{\parallel}_{\mathrm{pub}}(\operatorname{IC}_{k,n}) = O(\log n).
\]

\paragraph*{Private-Coin Baseline.}
Without public randomness, a simple strategy generalizes the protocol described by Ambainis~\cite{ambainis1996}. Assume first that $n^{1/k}$ is an integer. Each player treats their $n$-bit input as a flattened $k$-dimensional tensor of size $n^{1/k} \times \dots \times n^{1/k}$. Each independently selects a uniform random coordinate $c_j \in [n^{1/k}]$ for the $j$-th dimension and sends $c_j$ together with the corresponding sub-tensor of dimension $k-1$. Additionally, player $P_k$ sends the corresponding hyperplane of the selector string $s$. Geometrically, these sub-tensors are orthogonal hyperplanes that intersect in exactly one common index $i$ (see \Cref{fig:priv_prot}). The referee checks whether $s^{(i)}=1$, outputting the corresponding tuple if so and aborting otherwise. The intersection is uniformly distributed over the $n$ coordinates, so each trial succeeds with probability $1/2$.

\begin{figure}[htbp]
    \centering
    \begin{subfigure}[b]{0.4\textwidth}
        \centering
        \includegraphics[width=\linewidth]{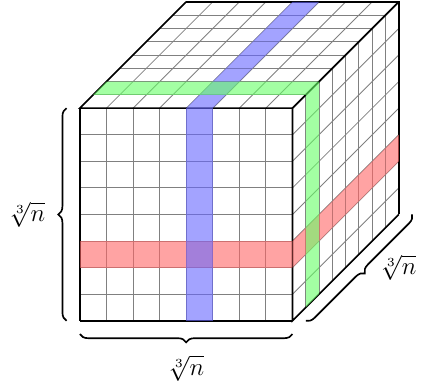}
        \caption{Input viewed as a cube when $k=3$.}
        \label{fig:first}
    \end{subfigure}
    \hspace{12mm}
    \begin{subfigure}[b]{0.305\textwidth}
        \centering
        \includegraphics[width=\linewidth]{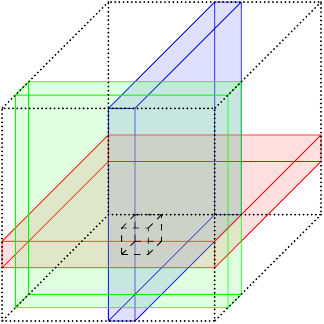}
        \vspace{1 mm}
        \caption{intersection of the three player planes at a coordinate}
        \label{fig:second}
    \end{subfigure}
    \caption{}
    \label{fig:priv_prot}
\end{figure}

For arbitrary $n$, we use a small amount of zero padding and allow slightly different side lengths. Specifically, pad the inputs and selector to length $N=2^L$, where $L=\lceil\log_2 n\rceil$, and distribute the $L$ binary index coordinates as evenly as possible among the $k$ dimensions. Each side then has length $2^{\lfloor L/k\rfloor}$ or $2^{\lceil L/k\rceil}$, with total volume $N<2n$; sides of length one are allowed. The arrangement is fixed and known to all parties. Since padded selector positions are zero, they only cause the referee to abort. Each trial still succeeds with probability $n/(2N)\ge1/4$, and every hyperplane contains at most $2N^{1-1/k}\le4n^{1-1/k}$ entries.

Thus, a constant number of independent parallel repetitions suffices for any fixed abort allowance $\varepsilon\in(0,1)$. The maximum message length is proportional to the size of one hyperplane, with an additional constant factor for the selector. The $O(\log n)$ coordinate headers are absorbed by this bound for $k\ge2$. Consequently,
\[
    R^{\parallel}_{\mathrm{priv}}(\operatorname{IC}_{k,n})
    = O\left(n^{1-1/k}\right).
\]
 The protocol is unambiguous, since every conclusive output is valid.

\subsection{Bounded-Error Quantum Lower Bound}
\label{sec:bounded-error}

We first establish our separation in the general bounded-error regime. Deriving a full $k$-party direct product theorem for bounded-error mixed states is blocked by the complexities of Semidefinite Programming (SDP) duality constraints across multiple tensor products \cite{gavinsky2007}. 

To circumvent this, we adapt our techniques to utilize the original two-register asymmetric direct product theorem of Gavinsky et al. We conceptually group the first $k-1$ independent players into a single ``meta-player'' by studying the parity of their inputs. This workaround allows us to embed the $k$-dimensional multiparty state space into a simple 2-register bipartite problem, enabling the use of these established bounded-error results.

\begin{lem}[Parity State Factorization {\cite{gavinsky2013}}]\label{lem:parity-factorization}
Let $n \in \mathbb{N}$ and let $X_1,\dots,X_n$ be independent, uniformly distributed random bits. For each $j \in [n]$ and $x \in \{0,1\}$, let $\rho_j^x \in \mathrm{D}(\mathcal{H}_j)$ be the state of the $j$-th quantum system conditioned on $X_j=x$. Define the parity variable $U_n=\bigoplus_{j=1}^n X_j$, and let $\alpha_n^u$ be the global state conditioned on $U_n=u$:
\[
    \alpha_n^u
    =
    \frac{1}{2^{n-1}}
    \sum_{\substack{x\in\{0,1\}^n\\ \bigoplus_{j=1}^n x_j=u}}
    \bigotimes_{j=1}^n \rho_j^{x_j},
    \qquad u\in\{0,1\}.
\]
Then the trace distance between the two parity states factors perfectly:
\[
    \|\alpha_n^0 - \alpha_n^1\|_{\mathrm{tr}}
    =
    \prod_{j=1}^{n} \|\rho_j^0 - \rho_j^1\|_{\mathrm{tr}}.
\]
\end{lem}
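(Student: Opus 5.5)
We will prove the identity by showing that the difference of the two parity states factors \emph{exactly as an operator}, and then use multiplicativity of the trace norm under tensor products. Concretely, we will establish
\[
    \alpha_n^0-\alpha_n^1=\frac{1}{2^{n-1}}\bigotimes_{j=1}^n\bigl(\rho_j^0-\rho_j^1\bigr).
\]
The key tool is the signed expansion
\[
    \bigotimes_{j=1}^n(\rho_j^0-\rho_j^1)=\sum_{x\in\{0,1\}^n}(-1)^{|x|}\bigotimes_{j=1}^n\rho_j^{x_j}.
\]
This follows by distributing the tensor product over the differences: each term chooses $\rho_j^1$ (with a sign $-1$) or $\rho_j^0$ in each factor. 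Grouping the terms by the parity $u=\bigoplus_j x_j=|x|\bmod 2$, the even-parity sum is $2^{n-1}\alpha_n^0$ and the odd-parity sum is $2^{n-1}\alpha_n^1$. This gives the displayed operator identity. The only bookkeeping point is that exactly $2^{n-1}$ strings have each parity, which holds for $n\ge1$ and matches the normalization in the definition of $\alpha_n^u$.

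Next we apply the trace norm. For Hermitian operators $A,B$ we have $\operatorname{Tr}|A\otimes B|=\operatorname{Tr}|A|\cdot\operatorname{Tr}|B|$: the eigenvalues of $A\otimes B$ are the products of eigenvalues, so the sums of absolute values multiply. By induction the unnormalized trace norm is multiplicative over $n$ factors.

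The remaining care concerns the paper's convention $\|A\|_{\mathrm{tr}}=\frac12\operatorname{Tr}|A|$. Multiplicativity of the unnormalized norm gives
\[
    \operatorname{Tr}\bigl|\alpha_n^0-\alpha_n^1\bigr|=\frac{1}{2^{n-1}}\prod_{j=1}^n\operatorname{Tr}\bigl|\rho_j^0-\rho_j^1\bigr|.
\]
Dividing both sides by $2$ and writing $\operatorname{Tr}|\cdot|=2\|\cdot\|_{\mathrm{tr}}$ on the right, the factor on the right becomes $\frac{2^n}{2\cdot 2^{n-1}}=1$. Hence
\[
    \|\alpha_n^0-\alpha_n^1\|_{\mathrm{tr}}=\prod_{j=1}^n\|\rho_j^0-\rho_j^1\|_{\mathrm{tr}},
\]
as claimed.

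There is no real obstacle in this argument. The step most prone to error is the normalization just described, namely checking that the powers of $2$ coming from the parity-class sizes cancel exactly against the factor $\frac12$ in each trace norm. We will also note that the case $n=1$ is trivial, since then $\alpha_1^u=\rho_1^u$.
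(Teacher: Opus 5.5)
Your proposal is correct and follows essentially the same route as the paper: the exact operator identity $\alpha_n^0-\alpha_n^1=2^{-(n-1)}\bigotimes_{j=1}^n(\rho_j^0-\rho_j^1)$, obtained by expanding the tensor product of differences and grouping the terms by parity, followed by multiplicativity of the Schatten $1$-norm under tensor products and the normalization $\|A\|_{\mathrm{tr}}=\frac12\operatorname{Tr}|A|$. Your explicit check that the powers of $2$ cancel, and your remark that each parity class has exactly $2^{n-1}$ strings for $n\ge1$, fill in bookkeeping that the paper's sketch leaves implicit.
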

\begin{proof}[Proof sketch]
For each $j\in[n]$, define $\Delta_j=\rho_j^0-\rho_j^1$. Subtracting the two conditional mixtures and expanding the tensor product of the local differences gives the exact operator identity
\[
    \alpha_n^0-\alpha_n^1
    =
    \frac{1}{2^{n-1}}
    \bigotimes_{j=1}^n \Delta_j.
\]
The claim then follows from the multiplicativity of the Schatten $1$-norm under tensor products and the normalized convention $\|A\|_{\mathrm{tr}}=\frac{1}{2}\|A\|_1$.
\end{proof}

This exact trace distance factorization allows us to replace the massive global multiparty state property with the product of local capacities, linking our setup to Gavinsky et al.'s 2-register theorem:

\begin{prop}[Bounded-Error 2-Register Direct Product {\cite{gavinsky2006}}]\label{prop:gavinsky-dpt}
Let $\alpha_X \otimes \beta_Y$ be a tensor product encoding independent, uniformly distributed random bits $X, Y$. Let $b = D_{\eta}(\beta_0, \beta_1)$ be the maximum probability of a $(b, \eta)$-predictor for $Y$. The optimal joint probability $p$ of obtaining a $(p, \eta/2)$-predictor for the pair $(X,Y)$ is bounded by:
\[
    p \le 32 \|\alpha_0 - \alpha_1\|_{\mathrm{tr}} \cdot b.
\]
\end{prop}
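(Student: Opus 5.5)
The plan is to decompose $\alpha_0,\alpha_1$ into a common part and a small distinguishing part, show that the common part contributes essentially nothing to conclusive correct outputs, and then turn the distinguishing part into a predictor for $Y$ acting on the $\beta$ register alone. Throughout, fix a joint POVM $\{E_{x'y'}\}_{x',y'\in\{0,1\}}\cup\{E_\bot\}$ on $\mathcal{H}_A\otimes\mathcal{H}_B$ that is a $(p,\eta/2)$-predictor for $(X,Y)$, and let $\delta=\|\alpha_0-\alpha_1\|_{\mathrm{tr}}$.

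\textbf{Step 1 (common/distinguishing decomposition).} Take the Jordan decomposition $\alpha_0-\alpha_1=\delta(\sigma_+-\sigma_-)$ with $\sigma_\pm$ density operators of orthogonal supports. This gives
\[
\alpha_0=\gamma+\delta\sigma_+,\qquad \alpha_1=\gamma+\delta\sigma_-,
\]
where $\gamma\ge0$ has trace $1-\delta$. Write $\sigma_0:=\sigma_+$ and $\sigma_1:=\sigma_-$. Every outcome probability $\operatorname{Tr}[E_{x'y'}(\alpha_x\otimes\beta_y)]$ then splits into a $\gamma$-part, which does not depend on $x$, and a $\delta\sigma_x$-part.

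\textbf{Step 2 (the common part only produces errors as often as successes).} Since $\gamma\otimes\beta_y$ does not depend on $x$, the $\gamma$-contributions satisfy
\[
\operatorname{Tr}[E_{x y'}(\gamma\otimes\beta_y)]=\operatorname{Tr}[E_{x y'}(\gamma\otimes\beta_y)]\big|_{X=1-x}.
\]
So each conclusive outcome carrying the correct $x$-label on input $(x,y)$ is, on input $(1-x,y)$, an outcome with a wrong $x$-label, and it occurs with the same $\gamma$-weight. Averaging over the uniform $(X,Y)$, the $\gamma$-part of the correct mass is at most the total error mass. By the predictor condition the error mass is at most $(\eta/2)p$. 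Hence
\[
p-\tfrac{\eta}{2}p\;\le\;\delta\cdot\mathbb{E}_{x,y}\operatorname{Tr}\big[E_{xy}(\sigma_x\otimes\beta_y)\big],
\]
which gives $p\lesssim \delta\, c$, where $c$ is the average conclusive-correct probability on the states $\sigma_x\otimes\beta_y$. This is the source of the factor $\delta$.

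\textbf{Step 3 (simulating a predictor for $Y$).} Next I bound $c$ by $O(b)$. The referee holding $\beta_Y$ picks $x$ (uniformly, or a well-chosen fixed value), prepares $\sigma_x$ locally, and applies the joint POVM to $\sigma_x\otimes\beta_Y$. It accepts only outcomes whose $x$-label equals the prepared $x$, and outputs the $y$-label. This is a legitimate POVM on $\mathcal{H}_B$, with acceptance probability at least $c$.

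\textbf{Main obstacle: the error of this $Y$-predictor.} The guarantee $\eta/2$ holds for the mixed states $\alpha_x$, not for $\sigma_x$, and going from $\alpha_x$ to $\sigma_x$ rescales masses by $1/\delta$. I would proceed in two parts.
\begin{enumerate}
\item Bound the wrong-$y$ mass on $\sigma_x\otimes\beta_y$ by $\delta^{-1}$ times the error mass on $\alpha_x\otimes\beta_y$, dropping the nonnegative $\gamma$ terms.
\item Compare this with the accepted mass $c$, which by Step 2 is at least $(1-\eta/2)p/\delta$. The ratio is then roughly $(\eta/2)/(1-\eta/2)$, which is at most $\eta$ for $\eta\le1$.
\end{enumerate}
If averaging over $x$ and $y$ loses too much, I would apply Markov/averaging to select a single $x$, or a restricted set of $y$-outcomes, on which the conditional error is at most $\eta$ while retaining a constant fraction of the acceptance. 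Each of these selections costs a factor of $2$. That yields $c\le O(b)$ with $b=D_\eta(\beta_0,\beta_1)$, and combining with Step 2 gives $p\le C\delta b$. Tracking the losses (factor $2$ from the trace-norm convention, $2$ from averaging over $x$, $2$ from the error conversion, $2$ from $(1-\eta/2)^{-1}$, plus slack) should produce $C=32$.
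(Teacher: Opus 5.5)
\textbf{The gap: the common part is not positive.} The argument breaks at Step~1, and the failure resurfaces exactly at your ``main obstacle''. For non-commuting $\alpha_0,\alpha_1$ the Jordan decomposition does not give a positive common part. The operator $\gamma=\alpha_0-\delta\sigma_+=\alpha_1-\delta\sigma_-$ is Hermitian with trace $1-\delta$, but in general $\gamma\not\geq 0$; equivalently, $\delta\sigma_x\le\alpha_x$ fails. For pure $\alpha_0\neq\alpha_1$ the only positive operator below both states is $0$. So no decomposition $\alpha_x=\gamma+\delta'\sigma_x$ with $\gamma\ge 0$ and $\delta'<1$ exists at all, however small $\|\alpha_0-\alpha_1\|_{\mathrm{tr}}$ is.

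Step~2 survives. It only uses $\sigma_{1-x}\ge 0$ to get $\operatorname{Tr}[E_{xy}(\gamma\otimes\beta_y)]\le\operatorname{Tr}[E_{xy}(\alpha_{1-x}\otimes\beta_y)]$, which yields $(1-\eta)p\le\delta c$ once you keep the error term on the right-hand side. But item~1 of your obstacle transfers the error guarantee from $\alpha_x\otimes\beta_y$ to $\sigma_x\otimes\beta_y$ precisely by dropping ``nonnegative'' $\gamma$-terms, and that is not available. The eigenvectors of $\alpha_0-\alpha_1$ can carry constant weight in directions that $\alpha_x$ barely sees. The joint POVM can therefore misbehave on $\sigma_x\otimes\beta_y$ at almost no cost on $\alpha_x\otimes\beta_y$. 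What you have is a correct proof when $\alpha_0$ and $\alpha_1$ commute (there it even gives $p\le\delta b/(1-\eta)$), but not in general.

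\textbf{Why it cannot be patched, and what is needed.} No averaging or Markov selection repairs this, because your intermediate target $c=O(b)$ is itself false. Take $\eta=1/4$, small $u>0$, $\theta=\eta u$, and orthonormal $|0\rangle,|1\rangle,|w_0\rangle,|w_1\rangle$. Let $|\phi_x\rangle=\cos\frac{\theta}{2}|0\rangle+(-1)^x\sin\frac{\theta}{2}|1\rangle$ and $\alpha_x=(1-\theta)|\phi_x\rangle\langle\phi_x|+\theta|w_x\rangle\langle w_x|$. Let $\beta_y=(1-u)|g\rangle\langle g|+u|y\rangle\langle y|$ with $|g\rangle\perp|0\rangle,|1\rangle$. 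The POVM with $E_{xy}=|w_x\rangle\langle w_x|\otimes|y\rangle\langle y|+\frac14|1\rangle\langle 1|\otimes I$ is a $(p,\eta/2)$-predictor for $(X,Y)$. Its label-independent noise term contributes error mass at most $3\theta^2/16$, against correct mass at least $\theta u$, so the conditional error is at most $3\eta/16$. Here $\delta\approx 2\theta$, and $\sigma_0,\sigma_1$ put weight about $1/2$ on $|+\rangle,|-\rangle$, each of which has overlap $1/2$ with $|1\rangle$. Hence $c\approx u/2+1/16$ stays bounded below, whereas $b\le u/(1-2\eta)=2u\to 0$. Every $\eta$-error predictor for $Y$ accepts with probability at most $b$, so no predictor extracted from $\sigma_x$ can accept with probability $\Omega(c)$. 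Concretely, your simulated predictor has conditional error tending to $1/2$ for each fixed $x$ and each restriction of the $y$-outcomes, since the noise is symmetric in all four labels.

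So a genuinely quantum ingredient is needed. The paper does not prove this proposition; it imports it from Gavinsky, Kempe, Regev and de Wolf, whose direct-product theorem is proved by semidefinite-programming duality for pure first-register states. Given a pure-state bound of the form $p\le C\,(1-|\langle\psi_0|\psi_1\rangle|)\,b$, the mixed trace-distance form follows by passing to Uhlmann purifications $|\psi_x\rangle$ of $\alpha_x$. This can only help the referee, and the purifications satisfy $1-|\langle\psi_0|\psi_1\rangle|\le\|\alpha_0-\alpha_1\|_{\mathrm{tr}}$ by the Fuchs--van de Graaf inequality.
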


Now we are ready to prove the communication lower bound for the bounded-error regime.

\begin{thm}[Bounded-Error Lower Bound]\label{thm:bounded-error}
For any constant error $\varepsilon < 1/8$, every bounded-error quantum SMP protocol without shared randomness or pre-shared entanglement that solves $\operatorname{IC}_{k,n}$ has maximum message length
\[
    m=\max_{j\in[k]}m_j=\Omega\left(n^{\frac{k-1}{k+1}}\right)
\]
qubits. Equivalently, at least one player must send this many qubits. The implicit constant depends only on $\varepsilon$ and is independent of both $n$ and $k$.
\end{thm}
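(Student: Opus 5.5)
We reduce to the two-register setting of \Cref{prop:gavinsky-dpt} by means of the parity collapse of \Cref{lem:parity-factorization}, and then use the random access code bounds of \Cref{thm:random-access} to show that some coordinate is hard for both the meta-player and $P_k$.

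\textbf{Hard distribution and per-coordinate predictors.} Fix a protocol with error $\varepsilon<1/8$ and let the inputs $x_1,\dots,x_k$ be uniform. For each coordinate $i$ and each player $j\le k-1$, let $\rho_{j,i}^{x}$ be $P_j$'s message averaged over $x_j$ conditioned on $x_j^{(i)}=x$. Set
\[
    d_{j,i}=\|\rho_{j,i}^0-\rho_{j,i}^1\|_{\mathrm{tr}} .
\]
The $L_2$ part of \Cref{thm:random-access} gives $\sum_i d_{j,i}^2=O(m)$. For $P_k$, and for each fixed $s$ with $s^{(i)}=1$, let $b_i$ be the best $\eta$-predictor probability $D_\eta$ for $x_k^{(i)}$ obtainable from $P_k$'s message, averaged suitably over $s$. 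The fixed-error $L_1$ part of \Cref{thm:random-access} gives $\sum_i b_i=O(m)$. A small technicality is that $s$ is part of $P_k$'s input; I would handle it by averaging over $s$ and including $s$ in $P_k$'s message register, so that the RAC bound still applies with $m$ qubits.

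\textbf{Collapsing to two registers.} Suppose the referee outputs index $i$ with probability $q_i$ and is correct given that output. From its output one obtains a predictor for the pair
\[
    \Bigl(U_i=\bigoplus_{j<k}x_j^{(i)},\; x_k^{(i)}\Bigr),
\]
where the parity is computed from the output bits. After averaging the remaining coordinates, the states of the first $k-1$ registers conditioned on $U_i$ are exactly $\alpha^{U_i}$ as in \Cref{lem:parity-factorization}. Their trace distance is therefore $\prod_{j<k}d_{j,i}$. Since $\sum_i q_i\ge 1-\varepsilon$ and conditional error at most $\eta/2$ can be enforced on a set of coordinates carrying constant total weight (Markov, using $\varepsilon<1/8$), \Cref{prop:gavinsky-dpt} yields
\[
    q_i\le 32\, b_i\prod_{j<k}d_{j,i}
\]
for those $i$. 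This in turn gives
\[
    \Omega(1)\le \sum_i b_i\prod_{j<k}d_{j,i}.
\]
The main obstacle is exactly this step: the selector restricts which $i$ are valid, $q_i$ depends on $s$, and the conditional-error requirement of \Cref{prop:gavinsky-dpt} holds only on average. I would fix $s$ to be uniform, define everything conditioned on $s$, and use averaging to extract the per-coordinate predictors with error $\eta/2$ for a constant $\eta<1/2$.

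\textbf{Optimizing.} By AM-GM,
\[
    \prod_{j<k}d_{j,i}\le \Bigl(\tfrac{1}{k-1}\sum_{j<k}d_{j,i}^2\Bigr)^{(k-1)/2}.
\]
Write $D_i=\frac{1}{k-1}\sum_{j<k}d_{j,i}^2$, so that $\sum_i D_i=O(m)$ and $D_i\le 1$. We then need to maximize $\sum_i b_iD_i^{(k-1)/2}$ subject to $\sum_i b_i=O(m)$, $\sum_i D_i=O(m)$, and $b_i,D_i\le1$. Hölder's inequality (or a convexity argument spreading mass uniformly over $t$ coordinates) bounds this by roughly
\[
    t\cdot\frac{m}{t}\Bigl(\frac{m}{t}\Bigr)^{(k-1)/2},
\]
maximized subject to $t\le n$. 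More carefully, Hölder with exponents $p=(k+1)/2$ and $p'=(k+1)/(k-1)$ applied to $b_i\cdot D_i^{(k-1)/2}$ uses $b_i^{(k+1)/2}\le b_i$ and gives
\[
    \sum_i b_iD_i^{(k-1)/2}\le \Bigl(\sum_i b_i^{(k+1)/2}\Bigr)^{2/(k+1)}\Bigl(\sum_i D_i^{(k+1)/2}\Bigr)^{(k-1)/(k+1)} .
\]
The tight configuration is $b_i=D_i=m/n$ for all $i$. This gives
\[
    \Omega(1)\le n\,(m/n)^{(k+1)/2},
\]
hence $m=\Omega\bigl(n^{1-2/(k+1)}\bigr)=\Omega\bigl(n^{(k-1)/(k+1)}\bigr)$.

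Justifying that the uniform spread is the maximizer is a standard rearrangement/Lagrangian check, done by splitting coordinates according to whether $b_i$ or $D_i$ is large. The constants in this argument come only from $\varepsilon$, the factor $32$, and the RAC bounds, so they are independent of $k$; the factor $(k-1)$ from AM-GM cancels because $\sum_i D_i$ is normalized by $k-1$.
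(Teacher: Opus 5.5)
\textbf{There is a genuine gap, in your optimization step.} It is structural, not a matter of constants.

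Under the constraints $\sum_i b_i=O(m)$, $\sum_i D_i=O(m)$ and $0\le b_i,D_i\le 1$, the sum $\sum_i b_iD_i^{(k-1)/2}$ is \emph{not} maximized by the uniform profile. Each summand is homogeneous of degree $(k+1)/2>1$, so concentration wins. The profile $b_i=D_i=1$ on $m$ coordinates gives value $m$, far larger than $n(m/n)^{(k+1)/2}$; players who send their first $m$ bits in the clear realize it. Your own heuristic $m(m/t)^{(k-1)/2}$ is decreasing in $t$, not increasing. Your H\"older estimate, after $b_i^{(k+1)/2}\le b_i$ and $D_i^{(k+1)/2}\le D_i$, only gives $\sum_i b_iD_i^{(k-1)/2}=O(m)$, i.e.\ $m=\Omega(1)$.

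No sharper inequality can rescue this, because with $s$ uniform the problem really is easy. All players send their bits on the first $t=\lceil\log_2(1/\varepsilon)\rceil$ coordinates, and $P_k$ also sends $s^{(1)},\dots,s^{(t)}$. Some $i\le t$ has $s^{(i)}=1$ except with probability at most $2^{-t}\le\varepsilon$. A greedy choice of $t$ coordinates achieves the same for \emph{any} selector distribution fixed before the protocol.

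\textbf{The missing idea is to choose the selector adversarially \emph{after} the protocol is fixed.} This is possible because the messages of $P_1,\dots,P_{k-1}$ do not depend on $s$. The paper's proof proceeds as follows.
\begin{enumerate}
\item It pools $c_i=\sum_{j<k}d_{j,i}^2$, which satisfies $\sum_i c_i\le C_0(k-1)m$.
\item By Markov, it finds a set $S$ of exactly $n/2$ coordinates on which $c_i\le 2C_0(k-1)m/n$, and fixes $s=\mathbf{1}_S$.
\item Every valid output must then use an index in $S$. There, AM--GM gives the \emph{uniform} bound $\prod_{j<k}d_{j,i}\le(2C_0m/n)^{(k-1)/2}$.
\item Only afterwards is $b_i$ defined, as the optimal $4\varepsilon$-error predictor probability for $x_k^{(i)}$ under this fixed $s$. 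The RAC bound $\sum_i b_i\le C_1m$ then applies directly to $x_k\mapsto\rho_k(x_k,s)$.
\item Summing $p_i\le 32(2C_0m/n)^{(k-1)/2}b_i$ over the good indices gives $1/2\le 32C_1m(2C_0m/n)^{(k-1)/2}$ directly, with no optimization over profiles.
\end{enumerate}
With $s$ fixed there is no need to put $s$ into $P_k$'s register. Doing so would cost about $n$ qubits and ruin the $O(m)$ bound anyway.

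Your parity collapse, your application of \Cref{prop:gavinsky-dpt}, and your Markov extraction of good indices otherwise match the paper.
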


\begin{proof}[Proof sketch]
    Fix a protocol in which every message has at most $m$ qubits. For each of the first $k-1$ players, the random access code bound limits the total squared distinguishability of the individual input bits. Pooling these quantities lets us choose a selector set of size $n/2$ on which all those players are collectively weak. For an index in this set, we replace their full tuple of bits by its parity: any referee that recovers the tuple also recovers this parity together with the last player's bit. Independence makes the distinguishability of the two parity states equal to the product of the local distinguishabilities, which is can be bounded by AM--GM. The two-register identification bound (\Cref{prop:gavinsky-dpt}) then bounds the probability of jointly predicting the parity and the last bit by this small product times the last player's prediction capacity. A second random access code bound over the information of the $k$-th player bounds the sum of those capacities over all indices. Summing the resulting success bounds and forcing the constant overall success yields $m=\Omega(n^{(k-1)/(k+1)})$.
\end{proof}

\begin{proof}
Fix a bounded-error protocol \(P\) in which every player sends at most
\(m\) qubits. We assume \(k\ge 2\) and that \(n\) is even.

\paragraph*{\normalsize Bounding the Information}
For every player \(j\in[k-1]\) and coordinate \(i\in[n]\), let
\(\rho_{j,i}^0\) and \(\rho_{j,i}^1\) be the average message states
conditioned on \(x_{j,i}=0\) and \(x_{j,i}=1\), respectively, and
write \(d_{j,i}:=\|\rho_{j,i}^0-\rho_{j,i}^1\|_{\mathrm{tr}}\).
The Holevo--Helstrom measurement is a
\((1,1/2-d_{j,i}/2)\)-predictor for \(x_{j,i}\). Since
\(1-H(1/2-t/2)\ge c_0t^2\) for some universal constant \(c_0>0\)
and every \(t\in[0,1]\), the RAC inequality implies
\[
    \sum_{i=1}^n d_{j,i}^2\le C_0m
    \qquad\text{for every }j\in[k-1],
\]
where \(C_0\ge1\) is a universal constant.

We pool these quantities by defining
\(c_i:=\sum_{j=1}^{k-1}d_{j,i}^2\). Summing the preceding RAC bounds
over the first \(k-1\) players gives
\[
    \sum_{i=1}^n c_i
    =
    \sum_{j=1}^{k-1}\sum_{i=1}^n d_{j,i}^2
    \le C_0(k-1)m.
\]

\paragraph*{\normalsize Constructing the Hard Distribution}
We now use the pooled quantities to fix the selector of player \(P_k\).
By Markov's inequality, fewer than \(n/2\) coordinates satisfy
\[
    c_i>
    2\frac{C_0(k-1)m}{n}
    =
    \frac{2C_0(k-1)m}{n}.
\]
Consequently, there is a subset \(S\subseteq[n]\) of size exactly
\(n/2\) such that \(c_i\le 2C_0(k-1)m/n\) for every \(i\in S\). We fix the
selector string \(s\) to be the indicator of this set.

With \(s\) fixed, let
\[
    \beta_{i,s}^z
    :=
    \mathbb{E}_{x_k:x_{k,i}=z}\bigl[\rho_k(x_k,s)\bigr],
    \qquad
    b_i
    :=
    D_{4\varepsilon}
    \left(\beta_{i,s}^0,\beta_{i,s}^1\right).
\]
Thus \(b_i\) is the optimal conclusive probability of a
\(4\varepsilon\)-error predictor for the \(i\)-th bit of \(x_k\).
Because \(4\varepsilon<1/2\), the RAC inequality applied to the fixed
encoding \(x_k\mapsto\rho_k(x_k,s)\) yields
\[
    \sum_{i=1}^n b_i\le C_1m,
\]
where \(C_1>0\) depends only on the fixed error parameter
\(\varepsilon\), and not on \(n\) or \(k\).

\paragraph*{\normalsize Collapsing the First \(k-1\) Players}
Fix \(i\in S\) and define the parity
\(U_i:=x_{1,i}\oplus\cdots\oplus x_{k-1,i}\). Let
\(\alpha_i^0\) and \(\alpha_i^1\) be the joint message states of the
first \(k-1\) players conditioned on \(U_i=0\) and \(U_i=1\),
respectively. By the parity-state factorization lemma and AM--GM,
\[
\begin{aligned}
    \left\|\alpha_i^0-\alpha_i^1\right\|_{\mathrm{tr}}
    &=
    \prod_{j=1}^{k-1}d_{j,i} \\
    &\le
    \left(
        \frac{1}{k-1}
        \sum_{j=1}^{k-1}d_{j,i}^2
    \right)^{\frac{k-1}{2}} \\
    &=
    \left(\frac{c_i}{k-1}\right)^{\frac{k-1}{2}}
    \le
    \left(
        2C_0\frac{m}{n}
    \right)^{\frac{k-1}{2}}.
\end{aligned}
\]
Note here that the factor \(2C_0\) is independent of \(k\).

\paragraph*{\normalsize Bounding the Referee's Success}
Let \(p_i\) be the probability that the referee produces a conclusive
output with index \(i\). Since the fixed selector is supported on
\(S\), every valid output must use an index in \(S\). Call
\(i\in S\) \emph{good} if, conditioned on producing an output with
index \(i\), the reported tuple is correct with probability at least
\(1-2\varepsilon\), and let \(G\subseteq S\) be the set of good
indices.

Because the protocol is correct with probability at least
\(1-\varepsilon\), an averaging argument gives
\[
\begin{aligned}
    1-\varepsilon
    &\le
    \sum_{i\in G}p_i
    +(1-2\varepsilon)\sum_{i\in S\setminus G}p_i \\
    &\le
    1-2\varepsilon+2\varepsilon\sum_{i\in G}p_i.
\end{aligned}
\]
Thus \(\sum_{i\in G}p_i\ge1/2\).

For every \(i\in G\), run the referee's measurement and return the
parity of the first \(k-1\) reported bits together with the reported
\(k\)-th bit whenever the output index is \(i\); otherwise return
\(\bot\). This gives a \((p_i,2\varepsilon)\)-predictor for
\((U_i,x_{k,i})\). Applying the asymmetric two-register direct-product
bound from \Cref{prop:gavinsky-dpt}, followed by the parity estimate
above, gives
\[
    p_i
    \le
    32
    \left\|\alpha_i^0-\alpha_i^1\right\|_{\mathrm{tr}}b_i
    \le
    32
    \left(
        2C_0\frac{m}{n}
    \right)^{\frac{k-1}{2}}
    b_i.
\]
Summing over the good indices and using the RAC bound
\(\sum_i b_i\le C_1m\), we conclude that
\[
    \frac12
    \le
    \sum_{i\in G}p_i
    \le
    32C_1m
    \left(
        2C_0\frac{m}{n}
    \right)^{\frac{k-1}{2}}.
\]

\paragraph*{\normalsize Wrapping Up the Lower Bound}
Rearranging the preceding inequality, there is a constant
\(C_2\in(0,1]\), depending only on \(\varepsilon\), such that
\[
    m^{\frac{k+1}{2}}
    \ge
    C_2
    \left(\frac{1}{2C_0}\right)^{\frac{k-1}{2}}
    n^{\frac{k-1}{2}}.
\]
Taking the \(2/(k+1)\)-th power gives
\[
    m
    \ge
    C_2^{\frac{2}{k+1}}
    \left(\frac{1}{2C_0}\right)^{\frac{k-1}{k+1}}
    n^{\frac{k-1}{k+1}}.
\]
The first two factors are bounded below uniformly in \(k\). Indeed,
\(C_2^{2/(k+1)}\ge C_2^{2/3}\), while
\[
    \left(\frac{1}{2C_0}\right)^{\frac{k-1}{k+1}}
    \ge
    \frac{1}{2C_0}.
\]
Hence the implicit constant below depends only on the fixed error
parameter and remains independent of \(n\) and \(k\), even when
\(k=k(n)\) grows. Therefore,
\[
    m=\Omega\left(n^{\frac{k-1}{k+1}}\right).
\]
\end{proof}

\subsection{Unambiguous Strong Quantum Lower Bound}

We now consider the unambiguous regime as a special case of independent interest. Given the highly restricted nature of this setting, we are able to directly extend the state discrimination theorem for $k$ registers, bypassing the need for the two-register parity-collapse reduction entirely. This allows us to establish that unentangled quantum messages offer no asymptotic advantage over private coins (uniformly in the number of players), achieving a stronger lower bound.

\begin{thm}[Unambiguous Lower Bound]\label{thm:improved-lower-bound}
For any fixed constant $\varepsilon\in[0,1)$, every unambiguous quantum SMP protocol for $\operatorname{IC}_{k,n}$ without shared randomness or prior entanglement and with abort probability at most $\varepsilon$ has maximum message length
\[
    m=\max_{j\in[k]}m_j=\Omega(n^{1 - 1/k})
\]
qubits. Equivalently, at least one player must send this many qubits. The implicit constant depends only on $\varepsilon$ and is independent of both $n$ and $k$.
\end{thm}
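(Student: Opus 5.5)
The plan mirrors the proof of \Cref{thm:bounded-error}, with the parity collapse and \Cref{prop:gavinsky-dpt} replaced by the $k$-register unambiguous product bound of \Cref{cor:avg-capacity}. Fix an unambiguous protocol in which every message has at most $m$ qubits. For each player $j\in[k-1]$ and coordinate $i\in[n]$, let $\rho_{j,i}^0,\rho_{j,i}^1$ be the average message states conditioned on $x_{j,i}=0,1$. Let $s_{j,i}$ be the optimal average unambiguous success probability for identifying $x_{j,i}$ from this binary ensemble, as in \Cref{cor:avg-capacity}. An unambiguous predictor with average success $s_{j,i}$ is a $(s_{j,i},0)$-predictor, so the zero-error $L_1$ form of \Cref{thm:random-access} gives $\sum_i s_{j,i}\le m$ for each $j\le k-1$.

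\textbf{Constructing the hard distribution.} Pool the first $k-1$ players by setting $c_i:=\sum_{j=1}^{k-1}s_{j,i}$, so that $\sum_i c_i\le (k-1)m$. By Markov's inequality there is a set $S$ of size exactly $n/2$ with $c_i\le 2(k-1)m/n$ for all $i\in S$. Fix the selector $s$ to be the indicator of $S$. For player $P_k$, with $s$ fixed, define $\beta_{i}^z$ as the average state conditioned on $x_{k,i}=z$ and let $t_i$ be its average unambiguous capacity. A second application of the zero-error RAC bound gives $\sum_i t_i\le m$.

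\textbf{Bounding the referee.} Let $p_i$ be the probability that the referee outputs a conclusive answer with index $i$. Unambiguity together with the fixed selector forces $p_i=0$ for $i\notin S$, and $\sum_{i\in S}p_i\ge 1-\varepsilon$. For $i\in S$, post-process the referee's output: keep only outcomes with index $i$ and return the $k$ reported bits. Averaging over the other coordinates (which are independent and uniform), this is a zero-error identification measurement on the product ensemble $\bigotimes_j \rho_{j,i}^{b_j}$ with $b\in\{0,1\}^k$. The point to check is that unambiguity for every input implies unambiguity for the averaged mixtures, since supports of mixtures are spans of the supports of their components. By \Cref{cor:avg-capacity}, for each $b$ the success probability is at most $2^k\prod_{j<k}s_{j,i}\cdot t_i$. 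Averaging over $b$, $p_i\le 2^k t_i\prod_{j=1}^{k-1}s_{j,i}$. Applying AM--GM,
\[
p_i\le 2^k t_i\left(\frac{c_i}{k-1}\right)^{k-1}\le 2^k t_i\left(\frac{2m}{n}\right)^{k-1}.
\]
Summing over $i\in S$ gives $1-\varepsilon\le 2^k m(2m/n)^{k-1}=2(4m)\cdot(4m/n)^{k-1}/2^{k-1}\cdot 2^{k-1}$, i.e.\ roughly $m^k\gtrsim (1-\varepsilon)n^{k-1}/4^k$. This yields $m\ge \tfrac14(1-\varepsilon)^{1/k}n^{1-1/k}$, with a constant independent of $k$ because the $2^k$ loss becomes a factor $4$ after taking the $k$-th root.

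\textbf{Main obstacle.} The delicate step is the reduction to \Cref{cor:avg-capacity}, namely verifying that the post-processed POVM is genuinely unambiguous on the averaged product ensemble. For each fixed full input, the output on index $i$ is correct with certainty, so $\operatorname{Tr}(E_b\rho_{b'})=0$ for $b'\neq b$ on every pure input. Zero trace against PSD operators is preserved under convex combinations, and the averaged states are exactly such combinations, so the property should transfer. A second point to watch is whether the factor $2^k$ from \Cref{cor:avg-capacity} is acceptable. It is, since it only costs a constant after the $k$-th root. If the $s_{j,i}$ normalization causes trouble, I would instead work directly with the one-sided capacities $p_j^{(a)}$ from \Cref{thm:product-bound-gen}, bounding $\sum_a p_j^{(a)}\le 2 s_j$.
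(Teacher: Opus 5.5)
Your proposal is correct and follows the paper's proof essentially step for step: zero-error RAC for each of the first $k-1$ players, pooling plus Markov to fix the selector on a low-capacity set $S$, \Cref{cor:avg-capacity} applied to the averaged product ensemble (where you spell out the convex-combination argument for unambiguity that the paper only asserts), AM--GM, and a second RAC for $P_k$. Your constant $\tfrac14(1-\varepsilon)^{1/k}$ is valid, since $2^k m(2m/n)^{k-1}=2^{2k-1}m^k/n^{k-1}$, although your intermediate rewriting of that expression is off by a factor of $4$; and in your contingency remark the claim $\sum_a p_j^{(a)}\le 2s_j$ is false in general (two pure states with overlap $c\in(0,1)$ give $\sum_a p_j^{(a)}=2(1-c^2)>2(1-c)=2s_j$), with only the per-symbol bound $p_j^{(a)}\le 2s_j$ holding, but your main line never needs it.
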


\begin{proof}[Proof sketch]
    Fix an unambiguous protocol with constant error probability and maximum message length $m$. For each player and coordinate, let the local capacity be the largest probability of identifying that input bit without error. The zero-error case of the random access code bound says that the sum of a player's local capacities is at most the message length. We pool the capacities of the first $k-1$ players and fix the selector set among coordinates where their total capacity is small as the hard distribution. At any selected coordinate, the unambiguous direct-product theorem bounds the probability of identifying the entire tuple by the product of the local capacities. AM--GM allows to bound the contribution of the first $k-1$ players, while a second random access code bound controls the sum of the last player's capacities. Therefore the protocol's total success rate is at most $2^{2k-1}m^k/n^{k-1}$. Hence constant success probability requires $m=\Omega(n^{1-1/k})$.
\end{proof}

The full proof is deferred to \Cref{app:proof-of-unambiguous-lower-bound}.

\section{Conclusion and Open Problems}

In this work, we investigated the multiparty coordination capabilities of the SMP model, establishing that classical shared randomness is exponentially more powerful than unentangled quantum communication. By introducing and analyzing the natural $k$-party relational problem $\operatorname{IC}_{k,n}$, we proved robust polynomial lower bounds on the longest quantum message. We achieved this via a dual-regime approach: tight matching bounds in the unambiguous setting, supported by an exact subspace factorization theorem for product states, and a polynomial lower bound in the bounded-error setting via a structural parity-collapse technique. For fixed error parameters, the constants in these bounds and in the public-coin upper bound are independent of $k$. Thus, the exponential separation in maximum message length holds for every integer-valued function $k=k(n)\ge2$, including arbitrarily growing functions of $n$.

The matching private-coin upper bound gives this separation a direct resource interpretation. For every fixed positive abort probability, a simple classical protocol using independent private randomness solves $\operatorname{IC}_{k,n}$ unambiguously with maximum message length $O(n^{1-1/k})$, while every unambiguous quantum protocol without shared randomness or prior entanglement requires $\Omega(n^{1-1/k})$ qubits. Consequently, both models have unambiguous communication complexity $\Theta(n^{1-1/k})$. Even arbitrary local quantum encodings and a joint measurement at the referee cannot asymptotically reduce the communication required by the classical hyperplane protocol. Public randomness, however, allows the players to coordinate their choice of a candidate index and solve the same problem unambiguously with only $O(\log n)$ bits per player. Thus, for this relation, quantum communication provides no asymptotic advantage over private randomness in the unambiguous regime, while access to common random choices yields an exponential improvement. Moreover, both quantum lower-bound exponents tend to one as $k$ grows. When $k\ge c\log n$ for any fixed $c>0$, both lower bounds are $\Omega(n)$ and match the $O(n)$ protocol that sends the complete inputs and selector. Consequently, quantum and classical private-coin communication both have complexity $\Theta(n)$ in the unambiguous and bounded-error regimes in this range, while public randomness still permits maximum message length $O(\log n)$. This tight linear complexity in both regimes shows that the advantage of public randomness persists even as the coordination task becomes maximally costly in the absence of shared resources.

Despite these results, several fundamental questions remain regarding the exact limits of unentangled quantum communication and multiparty state discrimination.

\paragraph*{A Multiparty Direct Product Theorem for Bounded-Error Identification}
To derive our bounded-error lower bound, we circumvented the lack of a multiparty direct product theorem by employing a parity-collapse reduction. This allowed us to group the first $k-1$ players into a single meta-register and apply existing two-register asymmetric direct product theorems. Extending such bounds to the identification of all $k$ labels directly could provide another approach to understanding the bounded-error complexity of multiparty coordination.

\begin{open}
    Is it possible to obtain a generalized direct product theorem that directly bounds the identification probabilities of a target sequence encoded across a $k$-fold tensor product of mixed states in the bounded-error regime?
\end{open}

\paragraph*{Unifying the Lower Bounds Across Error Regimes}
For fixed $k$, an intriguing mathematical gap remains between our two quantum lower bounds: we established a tight $\Omega(n^{1-1/k})$ lower bound for the unambiguous regime, but only an $\Omega\left(n^{\frac{k-1}{k+1}}\right)$ bound for the bounded-error regime. These exponents arise from two different specializations of the RAC inequality. In the unambiguous setting, the zero-error requirement sets $\eta_i=0$ and yields an $L_1$ linear constraint on local identification probabilities. In the bounded-error setting, applying RAC to the biases of Helstrom predictors yields an $L_2$ sum-of-squares constraint on local trace distances. Whether this difference reflects the complexity of the problem or a limitation of the proof remains open.

\begin{open}
    For fixed $k$, can the bounded-error quantum lower bound for $\operatorname{IC}_{k,n}$ be improved to match the $\Omega(n^{1-1/k})$ unambiguous bound, or does there exist an unentangled quantum protocol that exploits the permitted error margin to achieve $O\left(n^{\frac{k-1}{k+1}}\right)$ communication?
\end{open}

\bibliography{refs}

\appendix

\newpage

\section{Proof of \Cref{thm:product-bound-gen}}
\label{app:proof-of-dir}

We first establish an algebraic technical lemma demonstrating that the orthogonal complement of the sum of ``wrong'' tensor product spaces perfectly factorizes.

\begin{lem}[Generalized Subspace Factorization]\label{lem:subspace-gen}
Let $k \in \mathbb{N}$. For each $j \in [k]$, let $\mathcal{H}_j$ be a finite-dimensional Hilbert space, and let $\Sigma_j$ be a finite alphabet. Suppose $\mathcal{H}_j = \sum_{a \in \Sigma_j} S_j^{(a)}$ where $S_j^{(a)} \subseteq \mathcal{H}_j$ are closed subspaces. Here and throughout the statement, a sum of subspaces denotes their linear span and need not be direct.
For any target string $b \in \Sigma_1 \times \cdots \times \Sigma_k$, define the global ``wrong'' subspace $W_b$ as the sum of the product subspaces indexed by label strings $b'\neq b$:
\[
    W_b = \sum_{b' \neq b} \bigotimes_{j=1}^k S_j^{(b'_j)}.
\]
Then the orthogonal complement of $W_b$ in the full tensor space $\mathcal{H} = \bigotimes_{j=1}^k \mathcal{H}_j$ factorizes completely:
\[
    W_b^{\perp} = \bigotimes_{j=1}^k \left( \sum_{a \neq b_j} S_j^{(a)} \right)^{\perp}.
\]
\end{lem}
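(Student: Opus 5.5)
The plan is to rewrite $W_b$ in a "cylindrical" form, where each summand has exactly one constrained tensor factor, and then take orthogonal complements factor by factor. Write $T_j := \sum_{a\neq b_j} S_j^{(a)} \subseteq \mathcal{H}_j$. The central claim is
\[
    W_b \;=\; \sum_{j=1}^k C_j, \qquad C_j := \mathcal{H}_1\otimes\cdots\otimes\mathcal{H}_{j-1}\otimes T_j\otimes\mathcal{H}_{j+1}\otimes\cdots\otimes\mathcal{H}_k .
\]

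Both inclusions follow directly from the hypothesis $\mathcal{H}_j=\sum_{a}S_j^{(a)}$.

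For "$\subseteq$": take a string $b'\neq b$. It differs from $b$ in some coordinate $j$, so $S_j^{(b'_j)}\subseteq T_j$. Every other factor $S_i^{(b'_i)}$ lies in $\mathcal{H}_i$. Hence $\bigotimes_i S_i^{(b'_i)}\subseteq C_j$.

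For "$\supseteq$": expand each full factor as $\mathcal{H}_i=\sum_{a}S_i^{(a)}$ and the $j$-th factor as $T_j=\sum_{a\neq b_j}S_j^{(a)}$. Since the tensor product distributes over (not necessarily direct) sums of subspaces, this is a statement about spans of elementary tensors. It shows that $C_j$ is the span of the product subspaces $\bigotimes_i S_i^{(b'_i)}$ with $b'_j\neq b_j$. Each such $b'$ differs from $b$, so $C_j\subseteq W_b$.

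Next, use $(\sum_j C_j)^\perp=\bigcap_j C_j^\perp$. Because $\mathcal{H}_j=T_j\oplus T_j^\perp$ orthogonally, we get $C_j^\perp=\mathcal{H}_1\otimes\cdots\otimes T_j^\perp\otimes\cdots\otimes\mathcal{H}_k$. This follows by splitting $\mathcal{H}$ into the orthogonal pieces $C_j$ and that subspace.

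It remains to show
\[
    \bigcap_{j=1}^k \bigl(\mathcal{H}_1\otimes\cdots\otimes T_j^\perp\otimes\cdots\otimes\mathcal{H}_k\bigr)=\bigotimes_{j=1}^k T_j^\perp .
\]
This intersection step is the only place where care is needed, since an intersection of tensor subspaces need not be a tensor product in general. I would handle it with adapted bases. For each $j$, choose an orthonormal basis of $\mathcal{H}_j$ consisting of a basis of $T_j$ together with a basis of $T_j^\perp$. The product basis is then an orthonormal basis of $\mathcal{H}$. Each $C_j^\perp$ is exactly the span of the product basis vectors whose $j$-th factor lies in the $T_j^\perp$ part.

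Subspaces spanned by subsets of one common basis intersect in the span of the intersection of those subsets. Here that intersection is the set of product vectors whose every factor lies in the $T_j^\perp$ part, and their span is $\bigotimes_j T_j^\perp$. This gives the stated factorization $W_b^\perp=\bigotimes_j\bigl(\sum_{a\neq b_j}S_j^{(a)}\bigr)^\perp$.
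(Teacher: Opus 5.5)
Your proof is correct and follows the paper's argument essentially step for step. You use the same cylinder decomposition $W_b=\sum_{t} C_t$ with both inclusions justified the same way, the same computation of $C_t^\perp$ from $\mathcal{H}_t=T_t\oplus T_t^\perp$, and the same reduction to $\bigcap_t C_t^\perp$. The only difference is cosmetic: you settle the final intersection step with an adapted orthonormal product basis, while the paper observes that the commuting projectors $Q_t$ onto the $C_t^\perp$ multiply to $\bigotimes_j \Pi_{T_j^\perp}$, the projector onto their intersection. These are the same fact, since the $Q_t$ are simultaneously diagonal in your basis.
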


\begin{proof}
For each register $j \in [k]$, define the local ``wrong'' subspace
\[
    T_{j,b_j}:=\sum_{a\neq b_j}S_j^{(a)}.
\]
For every $t\in[k]$, we define the local cylinders
\[
    C_t
    :=
    \mathcal{H}_1\otimes\cdots\otimes\mathcal{H}_{t-1}
    \otimes T_{t,b_t}\otimes
    \mathcal{H}_{t+1}\otimes\cdots\otimes\mathcal{H}_k.
\]
We first show that
\[
    W_b=\sum_{t=1}^k C_t.
\]
Indeed, every string $b'\neq b$ differs from $b$ in at least one coordinate $t$. For such a coordinate,
$S_t^{(b'_t)}\subseteq T_{t,b_t}$, and hence
\[
    \bigotimes_{j=1}^k S_j^{(b'_j)}
    \subseteq C_t.
\]
Summing over all $b'\neq b$ gives $W_b\subseteq\sum_{t=1}^k C_t$. Conversely, the assumptions
$\mathcal{H}_j=\sum_{a\in\Sigma_j}S_j^{(a)}$ and
$T_{t,b_t}=\sum_{a\neq b_t}S_t^{(a)}$, together with distributivity of tensor products over finite sums, imply
\[
    C_t
    =
    \sum_{\substack{a\in\Sigma_1\times\cdots\times\Sigma_k\\ a_t\neq b_t}}
    \bigotimes_{j=1}^k S_j^{(a_j)}
    \subseteq W_b.
\]
Therefore, the two subspaces are equal.

We next compute the orthogonal complement of each cylinder $C_t$. The orthogonal decomposition
$\mathcal{H}_t=T_{t,b_t}\oplus T_{t,b_t}^{\perp}$ induces the orthogonal decomposition
\[
    \mathcal{H}
    =
    C_t
    \oplus
    \left(
    \mathcal{H}_1\otimes\cdots\otimes\mathcal{H}_{t-1}
    \otimes T_{t,b_t}^{\perp}\otimes
    \mathcal{H}_{t+1}\otimes\cdots\otimes\mathcal{H}_k
    \right).
\]
Consequently,
\[
    C_t^\perp
    =
    \mathcal{H}_1\otimes\cdots\otimes\mathcal{H}_{t-1}
    \otimes T_{t,b_t}^{\perp}\otimes
    \mathcal{H}_{t+1}\otimes\cdots\otimes\mathcal{H}_k.
\]

Finally, the orthogonal complement of a finite sum of subspaces is the intersection of their orthogonal complements, and therefore
\[
    W_b^\perp
    =
    \left(\sum_{t=1}^k C_t\right)^\perp
    =
    \bigcap_{t=1}^k C_t^\perp.
\]
For each $t$, the orthogonal projector onto $C_t^\perp$ is
\[
    Q_t
    =
    I_{\mathcal{H}_1}\otimes\cdots\otimes I_{\mathcal{H}_{t-1}}
    \otimes\Pi_{T_{t,b_t}^{\perp}}\otimes
    I_{\mathcal{H}_{t+1}}\otimes\cdots\otimes I_{\mathcal{H}_k}.
\]
These projectors act nontrivially on distinct tensor factors and hence
commute pairwise. The product of a finite family of commuting orthogonal
projectors is the orthogonal projector onto the intersection of their
ranges; see \cite[\S 42]{Halmos1974FiniteDimensional} for a proof of this fact.
Therefore,
\[
    \Pi_{W_b^\perp}
    =
    \prod_{t=1}^k Q_t
    =
    \bigotimes_{j=1}^k\Pi_{T_{j,b_j}^{\perp}}.
\]
Taking ranges on both sides gives
\[
    W_b^\perp
    =
    \bigotimes_{j=1}^k T_{j,b_j}^{\perp}
    =
    \bigotimes_{j=1}^k
    \left(\sum_{a\neq b_j}S_j^{(a)}\right)^\perp,
\]
which concludes the proof.
\end{proof}

Now we proceed with the proof of the theorem. Let $S_j^{(a)} = \operatorname{supp}(\alpha_j^{(a)})$. We can restrict each local space to the sum of its supports~\cite{rudolph2003}, so $\mathcal{H}_j = \sum_{a \in \Sigma_j} S_j^{(a)}$. 

First, consider the local identification problem. To identify the state $\alpha_j^{(b_j)}$ locally without error, a measurement operator $E_{b_j}$ must satisfy $\operatorname{Tr}(E_{b_j} \alpha_j^{(a)}) = 0$ for all incorrect symbols $a \neq b_j$. Since both operators are positive semidefinite, this is equivalent to requiring their supports to be orthogonal. Thus, defining the local ``wrong'' subspace as $T_{j, b_j} = \sum_{a \neq b_j} S_j^{(a)}$, every feasible operator $E_{b_j}$ satisfies $\operatorname{supp}(E_{b_j})\subseteq T_{j,b_j}^{\perp}$. Moreover, because $E_{b_j}$ is a POVM element, $0\preceq E_{b_j}\preceq I$, and hence $E_{b_j}\preceq\Pi_{T_{j,b_j}^{\perp}}$. The projector $\Pi_{T_{j,b_j}^{\perp}}$ is itself feasible: set every other conclusive POVM element to zero and take $E_{\bot}=I-\Pi_{T_{j,b_j}^{\perp}}$. Therefore, the optimal local success probability is
\[
    p_j^{(b_j)} = \operatorname{Tr}\left( \alpha_j^{(b_j)} \Pi_{T_{j, b_j}^{\perp}} \right).
\]

Now, consider the global identification problem. To unambiguously identify the target sequence $b$, the global POVM element $E_{\text{global}}^{(b)}$ must never trigger if any incorrect product state $\rho_{b'}$ (where $b' \neq b$) was sent. This enforces the strict condition:
\[
    \forall b' \neq b, \quad \operatorname{Tr}\left( E_{\text{global}}^{(b)} \, \rho_{b'} \right) = 0.
\]
Because $\rho_{b'} = \bigotimes_{j=1}^k \alpha_j^{(b'_j)}$, the support of $\rho_{b'}$ is exactly the tensor product of the local supports: $\bigotimes_{j=1}^k S_j^{(b'_j)}$.
For the trace to be zero, the measurement operator $E_{\text{global}}^{(b)}$ must be orthogonal to the support of every single incorrect product state. By extension, it must be orthogonal to the entire subspace spanned by the union of all these incorrect supports. 

Let $W_b$ denote this global ``wrong'' subspace, defined explicitly as the sum of the supports of all incorrect product states:
\[
    W_b = \sum_{b' \neq b} \operatorname{supp}(\rho_{b'}) = \sum_{b' \neq b} \bigotimes_{j=1}^k S_j^{(b'_j)}.
\]
Consequently, the support of $E_{\text{global}}^{(b)}$ must be contained in $W_b^{\perp}$. Since $0\preceq E_{\text{global}}^{(b)}\preceq I$, it follows that $E_{\text{global}}^{(b)}\preceq\Pi_{W_b^{\perp}}$. Therefore, the maximum global success probability is bounded by
\[
    P(b) \le \operatorname{Tr}\left( \rho_b \, \Pi_{W_b^{\perp}} \right).
\]

At this point, we apply our Generalized Subspace Factorization (\Cref{lem:subspace-gen}), which establishes that the orthogonal complement of this global sum of product subspaces factorizes into the tensor product of the local orthogonal complements:
\[
    W_b^{\perp} = \bigotimes_{j=1}^k T_{j, b_j}^{\perp}.
\]
Substituting this factorization into the preceding trace bound and using multiplicativity of the trace over tensor products, we obtain:
\[
    P(b) \le \operatorname{Tr}\left( \left(\bigotimes_{j=1}^k \alpha_j^{(b_j)}\right) \bigotimes_{j=1}^k \Pi_{T_{j, b_j}^{\perp}} \right) = \prod_{j=1}^k \operatorname{Tr}\left( \alpha_j^{(b_j)} \Pi_{T_{j, b_j}^{\perp}} \right) = \prod_{j=1}^k p_j^{(b_j)}.
\]
This completes the proof.

\section{Proof of \Cref{thm:improved-lower-bound}}
\label{app:proof-of-unambiguous-lower-bound}

\begin{proof}

We assume there exists an unambiguous protocol for the problem $\operatorname{IC}_{k,n}$. Let $m_j \le m$ be the message size in qubits sent by player $P_j$. For each player $P_j$ with $j\in[k-1]$ and coordinate $i\in[n]$, let $\rho_{j,i}^0$ and $\rho_{j,i}^1$ be the average message states conditioned on $x_{j,i}=0$ and $x_{j,i}=1$, respectively. We denote by $a_j^{(i)}$ the optimal average success probability, under the uniform prior on the bit, of a single unambiguous POVM for this binary ensemble, as in \Cref{cor:avg-capacity}.

\paragraph*{\normalsize Bounding the Information}

For every coordinate $i\in[n]$, the optimal unambiguous measurement for player $P_j$, where $j\in[k-1]$, is an $(a_j^{(i)},0)$-predictor: it produces a conclusive answer with probability $a_j^{(i)}$ and, conditioned on doing so, never outputs the wrong bit. Applying the RAC inequality (\Cref{thm:random-access}) with $\lambda_i=a_j^{(i)}$ and $\eta_i=0$ gives
\[
    \sum_{i=1}^n a_j^{(i)}(1-H(0))\le m_j.
\]
Since $H(0)=0$, this takes the simple form
\[
    \sum_{i=1}^n a_j^{(i)}\le m_j.
\]

We aggregate the capacities of the first $k-1$ players. Let $c_i := \sum_{j=1}^{k-1} a_j^{(i)}$ represent their pooled unambiguous prediction capacity for the $i$-th coordinate, and let $M := \sum_{j=1}^{k-1} m_j \le (k-1)m$ be their total communication budget. Summing the individual RAC inequalities yields
\[
    \sum_{i=1}^n c_i
    =
    \sum_{j=1}^{k-1}\sum_{i=1}^n a_j^{(i)}
    \le
    \sum_{j=1}^{k-1}m_j
    =
    M.
\]

\paragraph*{\normalsize Constructing the Hard Distribution}
Using this aggregated capacity, we construct a hard input distribution by fixing the $k$-th player's selector string $s$. Specifically, we require a subset $S$ of size exactly $n/2$ where the pooled capacity $c_i$ is small.
If $M=0$, then every $c_i$ is zero and any subset $S\subseteq[n]$ of size $n/2$ has the required property. Otherwise, by Markov's inequality, the number of indices $i \in [n]$ exceeding a threshold $T$ is strictly bounded by $M/T$. To guarantee there are at least $n/2$ ``good'' indices remaining, we must restrict the maximum number of ``bad'' indices to be strictly less than $n - n/2 = n/2$. Setting $M/T = n/2$ yields the precise threshold $T = 2M/n$. Therefore, there exists a static subset $S \subseteq [n]$ of size $n/2$ such that for all $i \in S$, $c_i \le 2M/n$. We fix the hard instance distribution by setting $s$ to be the indicator vector of this specific set $S$.

With this selector now fixed, let $\rho_{k,i}^0$ and $\rho_{k,i}^1$ be the average message states of player $P_k$ on inputs $(x_k,s)$, conditioned on $x_{k,i}=0$ and $x_{k,i}=1$, respectively. Let $a_k^{(i)}$ be the optimal average success probability of a single unambiguous POVM for this binary ensemble. Applying the RAC inequality to the encoding $x_k\mapsto\rho_k(x_k,s)$ for this fixed selector gives
\[
    \sum_{i=1}^n a_k^{(i)}\le m_k.
\]

\paragraph*{\normalsize Bounding the Referee Prediction}
Let $p_i$ be the unconditional probability, over uniformly random input strings and the protocol's private randomness, that the referee outputs a valid tuple with index $i$. Because the protocol is strictly unambiguous, it is forbidden from outputting an incorrect tuple; moreover, the selector string $s$ is supported on $S$, so a valid answer must use an index in $S$. Consequently, the total probability of generating a valid tuple is $\sum_{i\in S}p_i$.

For a fixed $i\in S$ and a fixed bit tuple $b\in\{0,1\}^k$, independence of the players' inputs and the product-message assumption imply that the joint average message conditioned on $(x_{1,i},\ldots,x_{k,i})=b$ is the product state $\bigotimes_{j=1}^k\rho_{j,i}^{b_j}$. The POVM element that outputs $(i,b)$ identifies this tuple unambiguously. Applying \Cref{cor:avg-capacity} and then averaging over the uniformly distributed tuple $b$ gives
\[
    p_i \le 2^k a_k^{(i)}\prod_{j=1}^{k-1}a_j^{(i)}.
\]
Therefore, the required global success probability implies
\[
    1 - \varepsilon \le \sum_{i \in S} p_i \le 2^k \sum_{i \in S} \left( a_k^{(i)} \prod_{j=1}^{k-1} a_j^{(i)} \right).
\]

To bridge the product of local capacities to our pooled sum $c_i$, we apply the Arithmetic Mean-Geometric Mean (AM-GM) inequality to the first $k-1$ players. By isolating the product and raising both the arithmetic and geometric means to the power of $k-1$, we obtain:
\[
    \prod_{j=1}^{k-1} a_j^{(i)} \le \left( \frac{1}{k-1} \sum_{j=1}^{k-1} a_j^{(i)} \right)^{k-1} = \left( \frac{c_i}{k-1} \right)^{k-1} \le \left( \frac{2M}{(k-1)n} \right)^{k-1}.
\]

\paragraph*{\normalsize Wrapping Up the Lower Bound}
Substituting this uniform bound back into the global success sum yields:
\[
    1 - \varepsilon \le 2^k \sum_{i \in S} a_k^{(i)} \left( \frac{2M}{(k-1)n} \right)^{k-1} \le 2^k m_k \left( \frac{2M}{(k-1)n} \right)^{k-1}.
\]

Finally, we evaluate this expression asymptotically. Substituting $M \le (k-1)m$ and $m_k \le m$ gives:
\[
    1 - \varepsilon \le 2^k m \left( \frac{2m}{n} \right)^{k-1}.
\]
Rearranging the terms to isolate $m^k$ yields
\[
    m^k
    \ge
    (1-\varepsilon)2^{-k}
    \left(\frac{1}{2}\right)^{k-1}
    n^{k-1}.
\]
Taking the $k$-th root gives the explicit bound
\[
    m
    \ge
    \frac{1}{2}(1-\varepsilon)^{1/k}
    \left(\frac{1}{2}\right)^{\frac{k-1}{k}}
    n^{1-\frac{1}{k}}.
\]
For every $k\ge 2$, we have $(1-\varepsilon)^{1/k}\ge (1-\varepsilon)^{1/2}$ and
\[
    \left(\frac{1}{2}\right)^{\frac{k-1}{k}}
    \ge
    \frac{1}{2}.
\]
Consequently,
\[
    m
    \ge
    \frac{\sqrt{1-\varepsilon}}{4}
    n^{1-\frac{1}{k}}
    =
    \Omega\left(n^{1-\frac{1}{k}}\right),
\]
\end{proof}

\end{document}